\def\({\left(}
\def\){\right)}
\newcommand{\tn}{\textnormal}
\newcommand{\R}{\mathbb{R}}
\newcommand{\abs}[1]{\left|#1\right|}
\newcommand{\ie}{\textit{i.e.}\ }
\newcommand{\vs}{\textit{vs.}\ }
\newcommand{\eg}{\textit{e.g.}\ }
\newcommand{\etc}{\textit{etc}}
\newcommand{\cf}{\textit{cf.}\ }
\newcommand{\schrod}{Schr\"odinger}
\newcommand{\x}{\mathbf{x}}
\newcommand{\p}{\mathbf{p}}
\newcommand{\n}{\mathbf{n}}
\def\sref #1{\S\ref{#1}}
\newcommand{\image}[3]{
\begin{center}
\begin{figure}[!ht]
\includegraphics[width=#2\textwidth]{#1}
\caption{\small{\label{#1}#3}}
\end{figure}
\end{center}
\vspace{-0.35in}
}
\newtheorem{theorem}{Theorem}
\newtheorem{metaprinciple}{Metaprinciple}
\newcommand{\setmetaprincipletag}[1]{% \setmetaprincipletag{<tag>}
  \let\oldthemetaprinciple\themetaprinciple% Store \themetaprinciple
  \renewcommand{\themetaprinciple}{#1}% Redefine it to a fixed value
  \g@addto@macro\endmetaprinciple{% At \end{metaprinciple}, ...
    %\addtocounter{metaprinciple}{-1}% ...restore metaprinciple counter value and...
    \global\let\themetaprinciple\oldthemetaprinciple}% ...restore \thetheorem
  }
\newtheorem{observation}{Observation}
\newtheorem{principle}{Principle}
\newcommand{\setprincipletag}[1]{% \setprincipletag{<tag>}
  \let\oldtheprinciple\theprinciple% Store \theprinciple
  \renewcommand{\theprinciple}{#1}% Redefine it to a fixed value
  \g@addto@macro\endprinciple{% At \end{principle}, ...
    %\addtocounter{principle}{-1}% ...restore principle counter value and...
    \global\let\theprinciple\oldtheprinciple}% ...restore \thetheorem
  }
\newtheorem{problem}{Problem}
\newtheorem{claim}{Claim}
\newcommand{\setclaimtag}[1]{% \setclaimtag{<tag>}
  \let\oldtheclaim\theclaim% Store \theclaim
  \renewcommand{\theclaim}{#1}% Redefine it to a fixed value
  \g@addto@macro\endclaim{% At \end{claim}, ...
    %\addtocounter{claim}{-1}% ...restore claim counter value and...
    \global\let\theclaim\oldtheclaim}% ...restore \thetheorem
  }
\newtheorem{corollary}{Corollary}
\newtheorem{definition}{Definition}
\newtheorem{objection}{Objection}
\newtheorem{option}{Option}
\theoremstyle{remark}
\newtheorem{remark}{Remark}
\newtheorem{reply}{Reply}
\newtheorem{implication}{Implication}
\newtheorem{comment}{Comment}
\definecolor{darkblue}{RGB}{0,0,128}
\definecolor{refcolor}{RGB}{0,0,190}
\begin{document}

\title{Are mental states nonlocal?}

\author{Ovidiu Cristinel Stoica}
% \homepage{http://www.Second.institution.edu/~Charlie.Author}
\affiliation{
 Dept. of Theoretical Physics, NIPNE---HH, Bucharest, Romania. \\
	Email: \href{mailto:cristi.stoica@theory.nipne.ro}{cristi.stoica@theory.nipne.ro},  \href{mailto:holotronix@gmail.com}{holotronix@gmail.com}
	}%

\date{\today}% It is always \today, today,
             %  but any date may be explicitly specified

\begin{abstract}
I show that if mental states are function of physical states, then they are nonlocal, in a sense that will be explained. I argue that, if mental states are reducible to brain physics, and if they are integrated experiences, this nonlocality implies that Classical Physics is not enough, in particular the computationalist thesis does not hold. I illustrate the argument with a thought experiment. The proof of nonlocality is straightforward and general, but the result is counterintuitive, so I spend a large part of the article discussing possible objections, alternatives, and implications. I discuss the possibility that Quantum Physics allows this kind of nonlocality.
\end{abstract}

%\keywords{Suggested keywords}%Use showkeys class option if keyword
                              %display desired

\maketitle

%------------------------------------------------------------%
\section{Introduction}
\label{sintro}

Modern Science can be seen as the quest to reduce all natural phenomena to explanations based on microphysics. This position emerged because progress in Science often coincided with new reductionist explanations.

By inductive and reductionist arguments one expects that all mental processes will eventually be reduced to microphysics \cite{Churchland1981EliminativeMaterialism,sep-eliminative-materialism,Dennett1993ConsciousnessExplained,Hofstadter2007-I-am-a-strange-loop,Dennett2016Illusionism}.
There are probably as many views on what mental states are as there are authors, so I will not even attempt to elucidate this or give a definition that would satisfy everybody. I will focus on \emph{physicalist} or \emph{materialist} approaches, that claim to reduce mind to physics or matter. Various types of physicalism include \emph{functionalism}, \emph{behaviorism}, \emph{computationalism} \etc.

The general, common feature of all physicalist approaches, can be summarized as the following:
\setclaimtag{R}
\begin{claim}[of physicalism]
\label{claim:mental2physical-reduction}
Mental processes reduce to and are determined by brain physics.\footnote{Throughout this paper I will approximate the physical support of the mind with the brain, although it is known that other parts of the body play important role, \eg the \emph{enteric nervous system}. The position of \emph{externalism} with regard to mental content  consider that mental states require the environment too (see Reply \ref{reply:PhysicalMentalCorrespondence}), so in this case we can include it as well.}
\end{claim}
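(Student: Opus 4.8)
The plan is to acknowledge from the outset that Claim~\ref{claim:mental2physical-reduction} is not a theorem to be derived deductively from prior axioms, but the defining commitment shared by the family of positions surveyed above; accordingly, a ``proof'' here can only take the form of an abductive case that this commitment is the natural one to adopt, together with a verification that each named variant of physicalism is an instance of it. I would therefore split the argument into an inductive part and a closure part.

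First, for the inductive part, I would assemble the base of cases: historical reductions such as thermodynamics to statistical mechanics, chemistry to quantum mechanics, heredity to molecular biology, and life to biochemistry, each exhibiting a phenomenon once regarded as autonomous and later subsumed under microphysics. The inductive step extrapolates the pattern to mental phenomena, yielding both clauses of Claim~R --- that mental processes \emph{reduce to} brain physics, and that they are \emph{determined by} it. Second, for the closure part, I would check that the positions listed (functionalism, behaviorism, computationalism) each entail the claim: functionalism identifies mental states with physically realized functional roles, behaviorism with physically grounded behavioral dispositions, and computationalism with computations run on the brain. The externalist caveat in the footnote is absorbed by enlarging the physical support to include the relevant environment, which leaves both clauses intact.

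The main obstacle --- and the reason this statement is posited rather than proved --- is the explanatory gap between third-person physical description and first-person phenomenal character: no inductive or definitional move closes the step from ``the brain implements such-and-such physics'' to ``there is something it is like'' to be that brain. I would not attempt to close this gap; instead I would frame Claim~R as a working hypothesis whose consequences are worth drawing out, so that the nonlocality result to follow functions as a stress test of the claim itself rather than a corollary of an already-secured premise.
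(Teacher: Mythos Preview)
You have correctly diagnosed the situation: Claim~\ref{claim:mental2physical-reduction} is not a theorem in the paper and carries no proof. The paper introduces it exactly as you describe --- as the \emph{defining} common feature of physicalist approaches --- preceded only by a one-sentence gesture at inductive/reductionist motivation (``progress in Science often coincided with new reductionist explanations''). It is then used as a standing hypothesis whose consequences (in particular Principle~\ref{ppPhysicalMentalCorrespondence} and Theorem~\ref{thm:nonlocality-O}) the paper draws out, precisely in the ``stress test'' spirit you articulate in your final paragraph.

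Where your proposal differs is in supplying content the paper deliberately omits at this point. Your inductive part (thermodynamics, chemistry, heredity, life) and your explicit closure check against functionalism, behaviorism, and computationalism are more than the paper offers here; the paper defers the verification that the named positions fall under the claim to the reply to Objection~\ref{obj:PhysicalMentalCorrespondence}, and it never assembles your list of historical reductions. Your discussion of the explanatory gap is likewise absent from the paper's treatment of Claim~\ref{claim:mental2physical-reduction}; the paper explicitly brackets the Hard Problem later, around Observation~\ref{ppIntegration}. None of this is wrong, but it is elaboration beyond the paper's own handling, which is simply to \emph{posit} the claim as the physicalist's commitment and move on.
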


The apparent lack of essential contribution from quantum effects in the brain's functionality motivated many researchers
 to consider as true the following strengthening of Claim \ref{claim:mental2physical-reduction},
\setclaimtag{R${}^{+}$}
\begin{claim}[of classical physicalism]
\label{claim:mental-classical}
Mental processes reduce to Classical Physics (in the sense that Quantum Physics is not relevant for the mental processes).
\end{claim}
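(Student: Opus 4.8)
The plan is to support Claim \ref{claim:mental-classical} not by a derivation from first principles---strictly there is none, since it is an empirical conjecture rather than a mathematical statement---but by assembling the inductive case on which the physicalist literature has (often tacitly) relied. First I would invoke the standard decoherence estimates for the brain: the plausibly mind-relevant degrees of freedom---ion concentrations, membrane potentials, synaptic vesicle counts, gross network activity---are effectively macroscopic and strongly coupled to a warm, wet, noisy environment at $T\approx 310\,$K, so any coherent superposition of macroscopically distinct neural configurations decoheres on timescales far shorter than the millisecond scale on which neurons operate. One then argues that the reduced dynamics of these degrees of freedom is, to overwhelming accuracy, that of a classical stochastic (or deterministic) system, so that whatever map sends physical brain states to mental states factors through a purely classical description.

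Second, I would marshal the track record of computational neuroscience: every mechanism so far identified as functionally relevant---Hodgkin--Huxley membrane dynamics, Hebbian plasticity, rate and spike codes, attractor networks---admits a classical description and has been successfully simulated within Classical Physics. By the same inductive and reductionist reasoning invoked in \sref{sintro} for Science at large, one extrapolates that the as-yet-unexplained residue will succumb to the same treatment. Third, one may motivate Claim \ref{claim:mental-classical} ``top-down'' rather than ``bottom-up'': it is precisely the premise presupposed by functionalism, behaviorism, and computationalism, so a proponent of any of these programs is already committed to it.

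The hard part---and the one the rest of this paper will press on---is that none of these steps actually closes the argument. Decoherence establishes that the brain's \emph{physical} state behaves classically; it does not establish that the further map from physical states to \emph{integrated} mental experiences can itself be realized by a classical (in particular computational) process, and it is exactly this step that the nonlocality result announced in the abstract will call into question. The honest status of Claim \ref{claim:mental-classical} is therefore that of a working hypothesis whose plausibility rests entirely on induction; its failure---should mental states prove nonlocal in the sense to be made precise below---would contradict no established physics, but it would refute the classical-physicalist and computationalist readings of Claim \ref{claim:mental2physical-reduction}.
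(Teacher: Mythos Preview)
There is a basic misreading of the role this statement plays in the paper. Claim~\ref{claim:mental-classical} is not a result the paper proves; it is a \emph{thesis} the paper attributes to ``classical physicalism'' and then proceeds to \emph{refute} (Corollary~\ref{thm:claims_contradiction}). The paper offers no proof of it and does not intend to: the entire motivation it supplies is a single clause---``[t]he apparent lack of essential contribution from quantum effects in the brain's functionality motivated many researchers to consider as true the following\ldots''---after which the claim is simply stated as a target.

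Your three paragraphs therefore do not correspond to anything in the paper's treatment of this statement. You have written a carefully hedged inductive case (decoherence estimates, the track record of computational neuroscience, the top-down commitment of functionalism) for a claim the author introduces only in order to knock down. You seem half-aware of this in your final paragraph, where you correctly identify Claim~\ref{claim:mental-classical} as a ``working hypothesis'' that the nonlocality result will ``call into question''; but that concession undercuts the framing of the first two paragraphs as a ``proof proposal.'' The honest thing to say is that there is nothing here to prove: the statement is a labeled hypothesis, and the paper's contribution is Corollary~\ref{thm:claims_contradiction}, which argues that it is \emph{false} under the Integration Property.
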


Logical, but also informal thinking, as well as the functional description of the mental processes, can be modeled computationally, as information processing, neural network processing, machine learning \etc.
For this reason, probably most researchers take as true an even stronger claim \cite{MccullochPitts1943LogicalCalculusOfTheIdeasImmanentInNervousActivity,Turing1950ComputingAndIntelligence,Putnam1967Computationalism,Fodor1975LanguageOfThought,Tegmark2014OurMathematicalUniverse,RussellNorvig2016ArtificialIntelligence,Dennett2016Illusionism,sep-eliminative-materialism,sep-computational-mind},
\setclaimtag{R${}^{++}$}
\begin{claim}[of classical computationalism]
\label{claim:mental-classical-computer}
Mental processes reduce to classical computing.\footnote{
While it is possible to computationally simulate with any desired approximation both classical and quantum systems, Claims \ref{claim:mental-classical} and \ref{claim:mental-classical-computer} are not considered by all researchers to be the same, some insisting that the biological substrate is essential for consciousness \cite{Searle1992RediscoveryOfMind}.
}
\end{claim}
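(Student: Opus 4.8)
The final statement, Claim~\ref{claim:mental-classical-computer}, is not a mathematical proposition but the working hypothesis of classical computationalism, so a ``proof'' of it can only mean an argument that derives it from the weaker Claim~\ref{claim:mental-classical} together with standard background assumptions about classical physics. The plan is therefore to bridge the gap between ``reduces to Classical Physics'' and ``reduces to classical computing'' in two moves: a computability claim about classical dynamics, and a transfer claim about the supervening mental states.

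First I would invoke the physical Church--Turing thesis. A classical system --- point particles under Newtonian forces, or classical fields --- evolves by differential equations whose solutions, on bounded time intervals and for physically reasonable initial data, are computable to arbitrary precision; hence the evolution of any such system, the brain included, can be simulated by a Turing machine up to any prescribed tolerance. Granting Claim~\ref{claim:mental-classical}, this covers the brain. Second, since by Claim~\ref{claim:mental2physical-reduction} the mental states are functions of, and determined by, the physical state, they should be inherited by a sufficiently accurate simulation: if the simulated trajectory stays within the tolerance that the physical-to-mental map can resolve, the mental states it carries coincide with the real ones, so running the simulation instantiates the same mental processes. The multiple-realizability intuition --- that functional organization, not the stuff implementing it, is what matters --- is what licenses treating the simulation's states as genuinely mental.

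The main obstacle is the second move, and it is a genuine one. It requires both (i)~that finite-precision, coarse-grained information about the physical state already fixes the mental state, so that approximation does no damage, and (ii)~that realizing the right functional organization in a computer is sufficient for the mental states regardless of the biological substrate --- exactly the premise denied by the Searle-style position cited in the footnote to Claim~\ref{claim:mental-classical-computer}. I expect the rest of the paper to block the derivation at (i): if the map from physical to mental states is nonlocal, then a local, step-by-step classical computation cannot faithfully track it, and the simulation argument collapses precisely at that point --- which would show not that Claim~\ref{claim:mental-classical-computer} is proved but that it fails.
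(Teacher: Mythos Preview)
There is nothing to compare against: the paper gives no proof of Claim~\ref{claim:mental-classical-computer}. It is introduced purely as a \emph{hypothesis} --- a position attributed to various researchers with supporting citations --- and the paper's aim is to \emph{refute} it, not establish it. The refutation comes via Corollary~\ref{thm:claims_contradiction}, which shows that the Integration Property is incompatible with Claim~\ref{claim:mental-classical}; since Claim~\ref{claim:mental-classical-computer} implies Claim~\ref{claim:mental-classical}, it falls as well.

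You clearly saw this, and your last paragraph correctly anticipates the paper's direction. But the derivation you sketch in the middle --- using the physical Church--Turing thesis and multiple realizability to pass from Claim~\ref{claim:mental-classical} to Claim~\ref{claim:mental-classical-computer} --- is your own construction, not something the paper attempts or needs. The paper simply presents the two claims as successively stronger positions that people in fact hold, notes in the footnote that not everyone accepts the step from one to the other (precisely because of the substrate worry you flag as obstacle~(ii)), and moves on. So your proposal is not wrong so much as aimed at a target that does not exist in the paper: you have written a thoughtful meta-analysis of why one \emph{might} believe the claim and where that belief is vulnerable, whereas the paper treats the claim as a premise to be knocked down rather than a conclusion to be reached.
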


In this article, I argue that mental states have a property that is not present in the physical states supposed to ground them.
The difference is that mental states are integrated, or unified experiences, in a sense in which classical physical states are not.

I will show that if we assume physicalism and that mental states are \emph{integrated experiences} (\cf Observation \ref{ppIntegration}), then mental states are nonlocal, in the sense that they depend instantaneously on spacelike separated events. Since this nonlocality seems different from other forms of nonlocality encountered in physics, I will call it \emph{O-nonlocality} (\cf Definition \ref{def:nonlocality-O}).

%I will detail the meaning of this definition and the proofs in Sections \sref{s:preliminary} and \sref{s:proof} \todo{check refs}.
We will see that O-nonlocality seems to be in tension with relativistic locality, which grounds the very causality of physical processes.
I will argue that O-nonlocality is not supported by Classical Physics, contradicting Claims \ref{claim:mental-classical} and \ref{claim:mental-classical-computer}.
Therefore, this is a problem that has to be addressed by any physicalist program that aims to include the mental processes.

O-nonlocality suggests a sense in which a mental state is more than the sum of its parts, if these parts are the local physical quantities at each point in the relevant region of space. This may seem as preventing the reductionism of mental processes to physical ones. On the other hand, O-nonlocality is strikingly similar to \emph{quantum entanglement} (entanglement is not nonlocality \emph{per se}, but it encodes nonlocal correlations that can be objectively accessed only by measurements). Therefore, O-nonlocality may still be consistent with the physicalist Claim \ref{claim:mental2physical-reduction}, if we allow for quantum effects.

While the proof is very simple, even tautological, its implications may be counterintuitive, so I start by describing a thought experiment that may bring more intuition in Sec. \sref{s:thought-experiment}.
The more formal proof, given in Sec. \sref{s:proof}, is independent of the fundamental theory of the microphysical level.
In Sec. \sref{s:quantum} I argue that the best option is Quantum Physics, and analyze the soundness of this identification. 
In Sec. \sref{s:implications} I discuss possible implications of these results.
In Appendix \sref{a:objections} I try to anticipate and address possible objections.
In Appendix \sref{a:options} I explore possible alternative ways out.

%------------------------------------------------------------%
\section{The thought experiment}
\label{s:thought-experiment}

The proof, given in Sec. \sref{s:proof}, is pretty straightforward. But since its conclusion may be counterintuitive to the reader trained to project mental states on computer states or physical states in general, let us first consider the argument in a thought experiment that should make its essential points clearer, and bring an intuition of what kind of nonlocality I am talking about. For this, I will amplify the elements of the argument to a cosmic scale.

In \sref{s:thought-experiment-digital} I discuss a version of the thought experiment which assumes Claim \ref{claim:mental-classical-computer}.
Claim \ref{claim:mental-classical-computer} is stronger than Claim \ref{claim:mental-classical}, but it simplifies the argument showing that mental states involve a certain kind of nonlocality.
Another reason to focus separately on Claim \ref{claim:mental-classical-computer} is that it is supported by various researchers. 

Then, in \sref{s:thought-experiment-analog}, I propose and discuss a version of the thought experiment which involves a biological brain, satisfying the more general Claim \ref{claim:mental-classical}.

%------------------------------------------------------------%
\subsection{The cosmic computer thought experiment}
\label{s:thought-experiment-digital}

Consider a hypothetical classical computer supporting mental states.
Let us spread its components throughout our galaxy, or even across more galaxies, by placing each one of them on a separate space station orbiting a different star. Suppose we also arrange that the space stations storing these small parts exchange electromagnetic signals across the galaxy, to ensure the flow of information necessary for the functioning of the computer.

It can be arranged, in principle, that our cosmic computer is fragmented into very small parts, storing or processing the smallest numbers of bits each. Two bits on each star should suffice, because any logical gate can be made of two-bit gates, as explained in Reply \ref{reply:gates}.

The mental states of this cosmic mind change really slowly, each bit requiring thousands or maybe millions of years to be processed. 
But if we assume that a computer can have mental states, then it should be possible for this cosmic contraption to have mental states too. There should be a mapping of the computer's states to mental states.

We can even replace the bits with humans who represent the bit by wearing or not a hat, and passing the hat from generation to generation to compensate for their too short lives compared to the duration of the processing. We can obtain in this way cosmic versions of the thought experiments from the \emph{Chinese nation argument} \cite{sep-chinese-room,NedBlock1978TroublesWithFunctionalism} or of Searle's \emph{Chinese room} argument \cite{Searle1980ChineseRoom}. 

But, compared to these well-known arguments, there is a plot twist here: \emph{the aim is to show that mental states are nonlocal in a certain sense}. And if this may not be so obvious when we are talking about the brain, which appears to our senses pretty much localized in space, this cosmic version should, hopefully, make this nonlocality evident even to the most skeptic readers, because the bits supporting the mental states are isolated and separated by spacelike intervals.

Several questions become natural at this point. Does this cosmic mental process take place continuously, or only when one of the bits flips its state? Are there mental states associated to the thousands of years of stagnation of the bits, time in which electromagnetic signals travel across the universe from one star to another?

Regardless of the answer to these questions, the corresponding mental states are nevertheless nonlocal, in the sense that any such state depends on bits located across very distant stars, maybe galaxies apart from one another, separated by spacelike intervals.
Let us make these problems more obvious.

Classical serial computers process one bit at a time (see Reply \ref{reply:enforced-locality}). Even if we make the computer parallel, there still is a central unit that breaks the task into smaller tasks, and then centralizes the results, and it does so one bit at a time. 
If the mental states are supported exclusively when the bits are flipped, and if we arrange that the computer processes one bit at a time, it seems unlikely for a bit to support complex mental states like happiness or sadness. So maybe the states of the other bits matter too, even if they are not flipped right at that time. But this raises other problems.
\begin{problem}
\label{pb:bits-matter-only}
If the mental states are supported exclusively by the states of the bits, then how is the
instantaneous
mental state grounded in the configuration of the bits spread across the galaxy, separated by spacelike intervals? How does this grounding manage to yield a unified, integrated,
instantaneous
 mental state, from these apparently disorganized zeroes and ones?
\end{problem}

\begin{problem}
\label{pb:bits-matter-only-discrimination}
If there are other space stations containing bits that are not part of the same computer, then how is the right subset of these bits selected out of them to ground the 
instantaneous
mental state of our computer, and yield integrated experience? What discriminates which subset of these bits belong to the computer that supports this mental state and which do not?
\end{problem}

We may hope that if we assume that not only the bit configuration supports mental states, but also the signals traveling on their way between different space stations, this would help discriminate the bits from the same configuration, and also help the mental state to integrate the configuration.
But Problems \ref{pb:bits-matter-only} and \ref{pb:bits-matter-only-discrimination} remain, they only extend to include the traveling signals.

On top of these problems, we have another one. 

\begin{problem}
\label{pb:camouflage}
There is no rule that the bits are stored in a certain way on the space stations. One can invent complicated contraptions to store the bits as coins on a table, or in the position of any object on the space station, or in whatever states any kind of object can have. Important is to have a way to read and write the bits by analyzing and rearranging these objects, and there are infinitely many different ways to do this. How is then a mental state able to emerge out of such a configuration camouflaged in such a way in the state of the universe? If we insist on supporting Claim \ref{claim:mental-classical-computer}, then there seem to be only a way out: accept that more possible subsets of the configuration of the universe support independent potential mental states!
Are different possible subsets of the state of the universe able to support,
provided that their instantaneous configuration can be identified with the state of an appropriate machine,
independent mental states?
\end{problem}

These Problems are not necessarily of computationalism, but of classical computationalism under the assumption of locality. These arguments do not necessarily refute quantum computationalism, only the classical one from Claim \ref{claim:mental-classical-computer}.

If we assume that mental states exist and are grounded at every instant in the configuration of the physical state, then these questions suggest that there must be something nonlocal about the mental states.
The reader who denies that mental states exist instantaneously can try to escape the necessity of nonlocality, by assuming that they have an extended duration. This possibility will be discussed, particularly in Objection \ref{obj:nonlocality-in-time} and Implication \ref{implication:relativity}.
The reader can also deny that mental states even exist at all, and conclude that therefore the problems mentioned here don't exist. This will come in Option \ref{opt:no-mental-states}. Another way out is the position that one should not discuss about what we cannot objectively observe, as in Option \ref{opt:instrumentalism}. There is nothing I can do about these positions. But for the reader who takes mental states as real and grounded in the physical state at every instant, I hope this thought experiment
suggest that \emph{if we claim that 
instantaneous
mental processes are determined by the computational processes of our cosmic computer, then there is an essential sense in which these mental states have to be nonlocal}. This is what I mean by \emph{O-nonlocality}.
And this thought experiment suggests that classicality cannot accommodate this O-nonlocality.

%------------------------------------------------------------%
\subsection{The biological brain thought experiment}
\label{s:thought-experiment-analog}

There are several significant differences between how brains and computers work. Brains employ \emph{neural networks}, but at least \emph{artificial neural networks} can be simulated on Turing-type computers. The brain seems to be analog, but we can make the case that even if it is analog, it can be approximated with a digital one to any desired degree, because what matters is the distinguishability of the states, which is limited. In addition, Quantum Physics implies that a localized system like the brain can only have a discrete, or even a finite number of distinguishable states, and this is true even if the brain's relevant functionality is quasiclassical. These arguments support the idea that the functionality of the brain can be simulated on a computer how closely we desire, a version of the \emph{Church-Turing thesis}. And if we believe that only the behavior matters, since the behavior of the mind can be simulated, the mind should be like a Turing machine.

However, maybe not only the computational or behavioral aspects of the brain are relevant for the mental processes, but also the fact that it is biological, its \emph{material substrate}. This is could make Claim \ref{claim:mental-classical-computer} to be stronger than Claim \ref{claim:mental-classical}. So we need to see if we can make the thought experiment from \sref{s:thought-experiment-digital} more general, by including the material substrate.

A possible way to adapt the cosmic computer thought experiment to the biological brain is to try to divide the brain into the smallest units for which the substrate is important. Maybe it is possible to divide it into neurons, or maybe into smaller parts of the neurons, which are made of the relevant substrate. Then, it is conceivable that we can replace the connections that allow signals to be exchanged between these parts, and the parts whose substrate is irrelevant, with other mechanisms that allow us to separate them throughout the galaxy, in a way similar to the cosmic computer thought experiment. But this requires an understanding of the brain that we do not currently have.

Another way, which will be used here, is to zoom-in, in a way similar to \emph{Lebniz's Mill argument}. Leibniz imagined zooming-in a brain until one can walk inside of it and see its machinery like inside a mill \cite{Leibniz1989Monadology}. Imagine that we zoomed-in, so that the brain appears as large as a galaxy. This would make its constitutive atoms of planetary sizes. Now we can see the problem, it is similar to the case of the cosmic computer thought experiment. While we no longer have bits, we have instead the states of the atoms, and their relative configurations. But the states of an atom are discrete, and the possible ways atoms combine into molecules are discrete too. So the biological brain version of the thought experiment is similar to the cosmic computer one in its essential aspects, except that in addition one assumes that the substrate is relevant too. In this case, the substrate is provided by the atoms and their arrangements, or even by the configurations of the electrons and the nucleons composing the atoms. At this level where Atomic and Molecular Physics and Chemistry become relevant, Quantum Physics becomes relevant too. But if we want to abide to Claim \ref{claim:mental-classical}, we have to assume that only the classical limit of the configurations of atoms and molecules is relevant.

At this point, we can see that even if we assume that the substrate is important for the mental states, this cannot avoid the problems that arise if we assume locality. In particular, this thought experiment reveals problems similar to Problems \ref{pb:bits-matter-only} and \ref{pb:bits-matter-only-discrimination}, with the amendment that instead of states of bits we are talking about quasiclassical approximations of states consisting of atoms and molecules.

Note that zooming-in to increase the apparent size of the brain did not actually change the way it works. The purpose of zooming-in was not to ``destroy'' the mental states, but only to make clearer the problem introduced by locality and spacelike separation between the atoms. So the O-nonlocality of mental states is manifest in the case of the biological brain too.
Not the classical computationalist Claim \ref{claim:mental-classical-computer} was the cause of the problems, but the assumption of locality.

%------------------------------------------------------------%
\section{The proofs}
\label{s:proof}

I will now formalize the argument.
To allow the reader to see where possible objections fit better, I will try to identify and highlight all of the assumptions and steps in the proofs and the arguments, no matter how obvious they may be.
Hopefully I will also correct some assumptions we may be tempted to make.

For theoretical-independence and generality (to protect the arguments from objections appealing to unknown physics), I rely in my arguments only on a common structure of most, if not all theories discovered so far: the representation of \emph{processes as temporal successions of physical states}. 
This mathematical structure is called \emph{dynamical system}, and it is common to classical and quantum, nonrelativistic and relativistic (even General Relativity \cite{adm2008admRepublication}), continuous and discrete and in fact to all more or less successful theories conceived so far.

Very generally, but not more formally than needed for the argument, a dynamical system can be defined as
\begin{enumerate}
	\item A \emph{state space} $S$, which is the set of all possible states of the system allowed by the theory.
	\item A set $H$ of subsets of $S$, called \emph{histories}, so that for each history $h\in H$ there is a surjective function $t:T\to h$ from a totally ordered set $T$ to $h$.
\end{enumerate}

The totally ordered set $T$ represents the \emph{time}\footnote{In general, $T$ also has a group or semigroup structure compatible with the order relation, but we do not need it here.}. Each history represents a \emph{process}, and specifies the state of the system for each instant $t\in T$.
The dynamical system can be given by specifying a \emph{dynamical law} governing the \emph{temporal evolution of states}, which is equivalent to specifying all possible histories\footnote{
In Newtonian Mechanics a state consists of the positions and momenta of all particles. The possible histories are summarized by \emph{Hamilton's equations}. In Standard Quantum Physics, a state is represented by a vector from a Hilbert space. The histories are given by \emph{{\schrod}'s equation}, interrupted during measurements by jumps whose probabilities are given by \emph{Born's rule}.}.

We do not know the final theory that can completely describe all natural phenomena, but all we know so far is consistent with the following
\setmetaprincipletag{DS}
\begin{metaprinciple}[dynamical system]
\label{ppDynamicalSystem}
The physical world is described by a dynamical system.
\end{metaprinciple}

A \emph{partial function} from a set $A$ to a set $B$ is a function $f:A'\to B$, where $A'\subseteq A$. For simplicity, I often called the partial function a function on $A$, even if it is partial, when it is clear from the context what I mean\footnote{The more exigent reader can interpret $f$ as a function on $A$, where $f(a)=d$ for $a\in A\setminus A'$, $d\notin B$ being a ``dummy'' element.}.

\begin{definition}[Properties]
\label{def:properties}
A \emph{property} of a system or a subsystem is a (partial) function of the state of the system, valued in some set (usually the real numbers $\R$).
\end{definition}
For example, in Classical Mechanics, properties are functions on the state space. In Quantum Physics, they are often not functions, but partial functions. For example, most state vectors don't have a definite position or momentum.

Because any process is a succession of states, Claim \ref{claim:mental2physical-reduction} implies that \emph{mental states are properties of physical states}. This can be formalized as a Principle (Fig. \ref{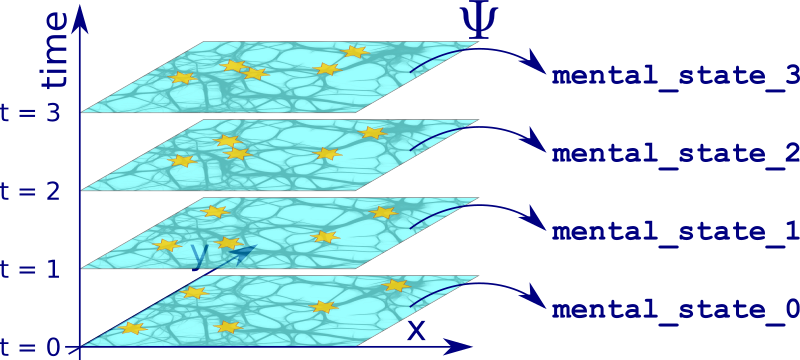}):
\setprincipletag{PM}
\begin{principle}[physical-mental correspondence]
\label{ppPhysicalMentalCorrespondence}
A function $\Psi$ associates to (some of the) physical states the corresponding mental states that they determine,
\begin{equation}
\label{eq:psi-mental-states}
\tn{\texttt{mental\_state}}=\Psi\(\tn{\texttt{physical\_state}}\)
\end{equation}
\end{principle}

\image{mental-states-correlates.png}{0.42}{Schematic representation of a network of neurons whose state evolves in time, determining a succession of mental states via the function $\Psi$ from Principle \ref{ppPhysicalMentalCorrespondence}.}

I detail how the major physicalist positions about the mental processes satisfy Principle \ref{ppPhysicalMentalCorrespondence} in the Reply to Objection \ref{obj:PhysicalMentalCorrespondence}.

A very trivial assumption is the following
\begin{observation}
\label{ppMentalExtended}
The physical states underlying mental states are extended in space.
\end{observation}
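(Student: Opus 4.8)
The plan is to derive this from empirical observation together with Metaprinciple~\ref{ppDynamicalSystem} and Principle~\ref{ppPhysicalMentalCorrespondence}, since the statement is, as the text notes, essentially an observation rather than a deep theorem. First I would fix the level of physical description: by Principle~\ref{ppPhysicalMentalCorrespondence} the mental state is a (partial) function $\Psi$ of a physical state, and by Metaprinciple~\ref{ppDynamicalSystem} that physical state belongs to some dynamical system --- a brain described in terms of neurons, or of molecules, atoms, electrons and nucleons, or ultimately of quantum fields. In each such description the degrees of freedom carrying the state are labelled by (or supported on) points of physical space, and the set of points relevant to a brain has positive diameter: a brain is simply not a point.

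Next I would make ``extended in space'' precise as the statement that the smallest spatial region on which the physical state read by $\Psi$ is supported has diameter bounded below by some $\ell>0$ (for a human brain, of order $10$\,cm; for the thought experiments of Sec.~\sref{s:thought-experiment}, of galactic scale). I would argue the contrapositive is untenable on two independent grounds: (i) the observed anatomy of brains, and, more strongly, (ii) the fact that in any known theory a bound structure rich enough for $\Psi$ to resolve many distinct mental states cannot be concentrated at a single point $\mathbf{x}\in\R^3$ --- indeed, even granting Quantum Physics, a bounded region supports only finitely many distinguishable states and this number grows with its size, so the informational richness of a mental state already forces $\ell>0$.

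The only real subtlety, and hence the main obstacle, is the status of space itself at the fundamental level: in approaches where space is emergent rather than primitive, ``extended in space'' must be read at the effective level at which the brain is actually described, not at a putative deeper level. I would dispose of this by noting that Sec.~\sref{s:proof} uses only the effective, quasiclassical description in which ordinary three-dimensional space is available and in which the brain demonstrably occupies an extended region; nothing in the subsequent derivation of O-nonlocality requires space to be fundamental, so the Observation holds in precisely the form that the later argument consumes.
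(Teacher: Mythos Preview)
Your proposal is correct, but it is considerably more elaborate than what the paper actually does. The paper treats Observation~\ref{ppMentalExtended} as a bare empirical fact and justifies it entirely by appeal to authority: it quotes William James (``there is no cell or group of cells in the brain of such anatomical or functional preeminence as to appear to be the keystone or center of gravity of the whole system''), Dennett (``there is no single point in the brain where all information funnels in''), and Damasio on the distributed nature of neural correlates. No derivation from Metaprinciple~\ref{ppDynamicalSystem} or Principle~\ref{ppPhysicalMentalCorrespondence} is attempted, no precise definition of ``extended'' is offered, and no information-theoretic bound is invoked at this point.

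Your route is genuinely different and in some ways stronger. By grounding the claim in the informational richness of mental states --- a point-supported state cannot resolve enough distinguishable values for $\Psi$ --- you give an argument that is independent of contingent neuroanatomy and that would survive even against a hypothetical ``pontifical neuron''. This line of reasoning does appear in the paper, but only much later, in Reply~\ref{reply:enforced-locality} (the Bekenstein bound) and Observation~\ref{ppDiversity} (diversity of mental states), where it is deployed to block the one-bit localization escape route rather than to establish Observation~\ref{ppMentalExtended} itself. So you have effectively front-loaded an argument the paper defers. The cost is that your version leans on Principle~\ref{ppPhysicalMentalCorrespondence} and on claims about state-counting that are themselves nontrivial, whereas the paper's version is a one-line empirical premise that any physicalist reader will grant immediately; the paper's approach buys simplicity and uncontroversial footing, yours buys robustness and internal coherence with the later argument.
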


This was already observed by William James, who wrote ``[t]here is no cell or group of cells in the brain of such anatomical or functional preeminence as to appear to be the keystone or center of gravity of the whole system'' (\cite{WilliamJames1890ThePrinciplesOfPsychology}, p. 180). Likewise, Dennett (\cite{Dennett1993ConsciousnessExplained}, p. 102) states that ``[t]here is no single point in the brain where all information funnels in'', and ``[t]he idea of a special center in the brain is the most tenacious bad idea bedeviling our attempts to think about consciousness''.

This can be related to Damasio's statement ``[w]hat we experience as mental states corresponds not just to activity in a discrete brain area but rather to the result of massive recursive signaling involving multiple regions'' \cite{Damasio2012SelfComesToMind}.
Damasio expresses the complexity of mental states
\begin{quote}
Even with the help of neuroscience techniques more powerful than are available
today, we are unlikely ever to chart the full scope of neural phenomena
associated with a mental state, even a simple one.
\end{quote}

I will discuss more evidence for this observation in the replies to the objections in \sref{a:enforced-locality}.

I now introduce some definitions that will be used in the argument.

\begin{definition}[Records]
\label{def:records}
\emph{Recording} is the act of changing the state of a system, called \emph{recorder}, by bringing it, according to the physical laws, into a state that corresponds to the state of another system, called \emph{recorded system}, so that distinct states of the recorder correspond to distinct states of the recorded system. The resulting distinct possible states of the recorder are called \emph{records}. A recorder can be recorded in its turn.
When the records can be interpreted numerically, the act of recording is called \emph{measurement}, and the recorder is called \emph{measuring apparatus}.
\end{definition}

\begin{definition}[Direct records]
\label{def:direct-record}
When the recorder and the recorded system coincide or overlap, or are the coarse grainings of more fundamental states that coincide or overlap, we say that the recorder has \emph{direct access} to the recorded system.
Otherwise we say that the recorder has \emph{indirect access} to the recorded system.
\end{definition}

Since mental states can discriminate between various observables, Definition \ref{def:records} and Principle \ref{ppPhysicalMentalCorrespondence} justify the following definitions.
\begin{definition}[Observer]
\label{def:observer}
When the physical state underlying a mental state can distinguish the properties of a system as in Definition \ref{def:records}, it acts like a recorder. If in addition $\Psi$ from eq. \eqref{eq:psi-mental-states} has different values for these properties, is also acts like a recorder, which we call \emph{observer}, and we call its act of recording \emph{observation}.
Records that can be distinguished by an observer are called \emph{observables}.
\end{definition}

\begin{observation}[Direct observations]
\label{ppDirectObservation}
Since by Definition \ref{def:observer} observations are those recordings made by mental states, the only \emph{directly observable} properties are observable properties of the physical system underlying the very mental state observing them. In other words, properties distinguished by the function $\Psi$ from eq. \eqref{eq:psi-mental-states}.
\end{observation}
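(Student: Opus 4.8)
The plan is to obtain the statement purely by composing the three preceding definitions, since ``directly observable'' is a compound notion built from ``observable'' (Definition~\ref{def:observer}) and ``direct access'' (Definition~\ref{def:direct-record}). First I would unfold what a \emph{directly observable} property must be: by Definition~\ref{def:observer} an observable is a record distinguished by an observer, i.e.\ by the pair formed of the physical state underlying a mental state and the function $\Psi$ of eq.~\eqref{eq:psi-mental-states}; and by Definitions~\ref{def:records} and~\ref{def:direct-record}, calling such a recording \emph{direct} means precisely that the recorder has direct access to the recorded system. Combining the two, a directly observable property is an observable property of a system to which the observer---which by Definition~\ref{def:observer} \emph{is} the physical system underlying the mental state---has direct access.

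Next I would invoke the only criterion for direct access supplied by Definition~\ref{def:direct-record}: the recorder and the recorded system must coincide or overlap, or be coarse grainings of more fundamental states that coincide or overlap. Applied here, the recorder is the physical substrate of the mental state, so the recorded system must coincide with, overlap, or be a coarse graining of a state overlapping that substrate. In each such case the property being recorded is, up to coarse graining, a property of the physical state underlying the mental state itself. Hence the directly observable properties are exactly the observable properties of that underlying physical state.

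Finally I would close the loop with Principle~\ref{ppPhysicalMentalCorrespondence}: by Definition~\ref{def:observer} the observable properties of the underlying physical state are precisely those that $\Psi$ resolves into distinct mental states (``$\Psi$ has different values for these properties''), so ``directly observable'' coincides with ``distinguished by $\Psi$'', which is the second sentence of the statement.

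The step I expect to require the most care is conceptual rather than computational: pinning down ``the physical system underlying the very mental state'' well enough that the coarse-graining clause of Definition~\ref{def:direct-record} does genuine work instead of being vacuous---that is, making precise that a mental state has \emph{direct} access only to (coarse grainings of) its own physical substrate, and only \emph{indirect} access, mediated by further recorders, to anything disjoint from it. Once that identification is fixed, the statement reduces to a restatement of the definitions, consistent with the author's remark that the proof is ``very simple, even tautological''.
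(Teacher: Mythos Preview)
Your proposal is correct and takes essentially the same approach as the paper: the Observation is presented there without a separate proof environment, its justification being the clause ``Since by Definition~\ref{def:observer}\ldots'' together with the surrounding Definitions~\ref{def:records}--\ref{def:observer}, which is exactly the definitional unfolding you carry out. Your version is simply more explicit about how the ``direct access'' clause of Definition~\ref{def:direct-record} and the ``$\Psi$ has different values'' clause of Definition~\ref{def:observer} combine, but there is no substantive difference in strategy.
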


All observations, including the scientific ones, are made by recording systems that have mental states, called observers. While the position that there is something fundamental about the consciousness of the observers is considered controversial, the fact that they have memories in which they record information about the data, and that appeal to each other's memories, even when conducting scientific research, is generally accepted.
However, the role of Definition \ref{def:observer} is not to introduce mental states as something fundamental, but merely to distinguish those recording systems endowed with mental states.

But what is the minimal feature that distinguishes mental states among other recorders? I propose that this is the \emph{Integration Property} defined below, which is the central assumption that I require the reader to accept about mental states.

\begin{observation}[Integration Property]
\label{ppIntegration}
Conscious mental states are integrated experiences. 
\end{observation}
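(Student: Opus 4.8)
The plan is not to \emph{derive} Observation~\ref{ppIntegration} — it is offered as a premise, not a consequence of the preceding structure — but to fix its content sharply enough that it can do work in Sec.~\sref{s:proof} without tacitly importing the conclusion. First I would make ``integrated'' operational: a conscious experience is \emph{integrated} when it is not the mere conjunction of independent sub-experiences, \ie there is a single subject to whom the color, the shape, the sound, the emotional tone, \etc are all co-present at once, and altering one constituent changes the phenomenal character of the whole rather than merely deleting a detachable part. Equivalently: the phenomenal field does not split into a direct sum of phenomenal fields each of which could occur in isolation with unchanged character. This is what the literature calls the \emph{unity of consciousness}.

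Second, I would assemble the evidential case, since no stronger support is available for a statement of this kind. The core datum is introspective: my present experience presents itself as \emph{one} experience, not as a population of experiences that happen to coincide. I would reinforce this with (i) the philosophical tradition on the unity of consciousness, from Kant's transcendental unity of apperception onward; (ii) the \emph{binding problem} — features computed in widely separated and, by Observation~\ref{ppMentalExtended}, spatially extended neural regions are nevertheless experienced bound together (\cf the quotations from James, Dennett and Damasio~\cite{Damasio2012SelfComesToMind} above), which is only intelligible if some integrating relation holds across those regions; and (iii) frameworks such as Integrated Information Theory, on which irreducibility of the whole to its parts is taken to be constitutive of consciousness. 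Together with Principle~\ref{ppPhysicalMentalCorrespondence}, this says that $\Psi$ sends a spatially extended physical state to a single, non-decomposable mental state.

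Third, rather than settle the standing objections here I would isolate and defer them: split-brain and dissociative cases (which at most show that the \emph{boundary} of a unified experience can move or divide, not that there is no unified experience); Dennett-style denials that there is any determinate unified stream to begin with (to be met by the eliminativist and instrumentalist Options discussed later, and by the Objection on nonlocality in time); and --- crucially --- the worry that I have defined ``integrated'' precisely so as to force nonlocality. The reply to the last is that the definition above is stated in purely phenomenal terms, with no reference to space whatsoever, so that the passage from integration to O-nonlocality in Sec.~\sref{s:proof} is a substantive inference, drawing on Observation~\ref{ppMentalExtended} and the local structure of Metaprinciple~\ref{ppDynamicalSystem}, and not a stipulation smuggled in at this stage.

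The main obstacle is exactly that last point, and it is structural: unlike a lemma, this claim cannot be \emph{proved}, only made precise and rendered plausible, so the real work is calibration. The integration condition must be simultaneously (a) weak enough to be granted by anyone who concedes that conscious experiences exist at all, and (b) strong enough that, conjoined with physicalism and spatial extension, it produces a genuine tension with relativistic locality. Threading that needle --- in particular, resisting the temptation to build spatial non-separability into the premise, which would make the central result trivial --- is where I expect most of the difficulty to lie.
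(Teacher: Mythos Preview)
Your proposal is correct in its essential approach --- you rightly recognize that Observation~\ref{ppIntegration} is a premise to be clarified and motivated, not derived, and your strategy (operationalize ``integrated'', ground it in introspection, defer objections to the Options and the time-extension discussion) tracks the paper's own treatment closely.

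The differences are in calibration and emphasis rather than method. Your operationalization is crisper but also stronger than the paper's: you require that ``altering one constituent changes the phenomenal character of the whole'', a holism/non-decomposability condition, whereas the paper deliberately keeps the bar lower --- integration ``even if only partially'', ``at least to some degree'', and is content with a Dennettian \emph{draft} being what is integrated. This matters for exactly the worry you raise at the end: the paper purchases wider assent by weakening the premise, at the cost of doing less to pre-empt the charge that a merely partial unity is compatible with a classical, C-nonlocal story. Conversely, you buy a cleaner inference to O-nonlocality but expose yourself to readers who will not grant full non-decomposability. Second, you assemble positive external support (Kant, the binding problem, IIT), while the paper's discussion is largely \emph{negative}: it disclaims the Cartesian Theater and the homunculus, declares neutrality on the Hard Problem, and concedes that introspection is the only evidence on offer --- routing dissenters to Options~\ref{opt:instrumentalism} and~\ref{opt:no-mental-states} rather than trying to convince them. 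Your explicit treatment of the circularity worry (that integration might be defined so as to force nonlocality) is an addition the paper does not make directly; the paper relies on Remark~\ref{rem:refute_nonlocality} and the Replies in \sref{a:objections-space} to do that work after the fact.
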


Here by ``integrated'' I mean the simple fact that, despite Observation \ref{ppMentalExtended}, the mental state experience is unified, even if only partially, rather than completely scattered in disconnected elementary parts.
What is a combination or integration of multiple factors, appears to us as a single, unified experience, which is the very mental state. A conscious mental state is the awareness of a combination of the multiple facets, of all of the structural parts making that very mental state.
For a particular mental state there may be an experience that merely looks like a unified self, which is largely an illusion and \emph{folk psychology}, see \eg \cite{Churchland1981EliminativeMaterialism,Dennett1993ConsciousnessExplained,Damasio2012SelfComesToMind}, but the experience of that illusion is nevertheless integrated, at least to some degree. Many illusions are involved here, but even these illusions are an integrated experience. If only a subset of the totality of the brain state can be experienced as a whole, like a \emph{draft} if the reader wishes \cite{Dennett1993ConsciousnessExplained}, I refer only to such a subset as being integrated.
I do not assume the widely discredited \emph{Cartesian Theater} or a \emph{homunculus}, as Dennett says people do by default \cite{Dennett2016Illusionism,Dennett2017FromBacteriaToBachAndBack},
although I suspect that in many cases people don't really assume such a homunculus, but rather that their experiences are integrated.
It can be objected that the only evidence for this integration comes through introspection, known in Psychology and Neuroscience to be an unreliable tool due to its subjectivity, while science is about objectively verifiable evidence. But the only lesson I am taking here from introspection and self-inquiry is that the mental states exist and their experience is integrated, even if they are not what they seem to us to be. The researchers who consider that the only evidence that counts scientifically should be objectively verifiable, and for whom even subjective experiments that can be repeated independently are invalid because each instance of these experiments is a different mind, can take as an exit route out of my arguments the alternative Option \ref{opt:instrumentalism} or Option \ref{opt:no-mental-states} from Appendix \sref{a:options}.

I do not require from the reader any ontological commitments or a particular position with respect to the 
\emph{Hard Problem of consciousness} \cite{Chalmers1995HardProblem}. 
According to Chalmers \cite{Chalmers1996ConsciousMindSearchFundamentalTheory}, ``a mental state is conscious if there is something it is like to be in that mental state. [...] This is the really hard part of the mind-body problem.''
Although some authors deny the very existence of such a problem \cite{Dennett2016Illusionism}, I tried to argue elsewhere \cite{Sto2020NegativeWayToSentience} that the problem is real and hard.
However, I consider that the arguments in this paper should be seen as independent of the Hard Problem, so I avoided discussing it in this paper, and I only assume that mental states have the {Integration Property}.

We can now define the type of nonlocality that we will prove to be associated to mental states.
\begin{definition}[O-nonlocality]
\label{def:nonlocality-O}
A property of a system is called \emph{O-nonlocal} if it is directly observable (as in Definition \ref{ppDirectObservation}) and depends instantaneously on spacelike separated events.
\end{definition}

\begin{remark}
\label{rem:separation-integration}
Let us consider a mental state having the {Integration Property}. Suppose that this mental state involves awareness, and that its physical correlates are located in a region of space $R$. Consider two disjoint subsets $A$ and $B$ of $R$, \ie $A \cap B=\emptyset$, so that $A \cup B = R$, as in Fig. \ref{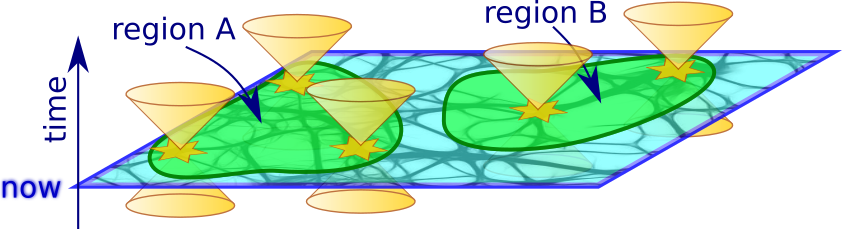}. 

\image{mental-states-integrated.png}{0.45}{Illustration of the O-nonlocality associated to the {Integration Property} and Remark \ref{rem:separation-integration}. The Figure represents a mental state grounded on a physical state that has two subsystems restricted to two disjoint regions A and B of space (the green regions).}

Since the mental state is grounded on region $A\cup B$, it depends on the state of the subsystem contained in $A$, but also on the state of the subsystem contained in $B$. This dependency of the mental state on the physical correlates in these regions, \emph{which are spacelike separated}, is an essential feature of O-nonlocality. The other one is \emph{direct observability}, as in Observation \ref{ppDirectObservation}.
\end{remark}

We are now ready for the proof of O-nonlocality, and the arguments that mental states are not consistent with Claim \ref{claim:mental-classical} (and implicitly neither with Claim \ref{claim:mental-classical-computer}).

\begin{theorem}[of O-nonlocality]
\label{thm:nonlocality-O}
If Principle \ref{ppPhysicalMentalCorrespondence} is true, then mental states are O-nonlocal.
\end{theorem}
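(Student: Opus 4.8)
The plan is to verify directly the two defining clauses of Definition \ref{def:nonlocality-O} --- direct observability, and instantaneous dependence on spacelike separated events --- for an arbitrary conscious mental state, using Principle \ref{ppPhysicalMentalCorrespondence} together with the background Observations \ref{ppMentalExtended}, \ref{ppDirectObservation}, and the Integration Property (Observation \ref{ppIntegration}). I would begin from the geometric setup of Remark \ref{rem:separation-integration}: fix such a mental state, let $R$ be the region occupied by its physical correlates, which is spatially extended by Observation \ref{ppMentalExtended}, and partition $R=A\cup B$ with $A\cap B=\emptyset$. At a single instant $t$, the events in $A$ and the events in $B$ are mutually spacelike separated, so it suffices to show that the mental state at $t$ is a nontrivial function of what happens in \emph{both} $A$ and $B$, and that it is directly observable.

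For the dependence clause, I would argue as follows. By Principle \ref{ppPhysicalMentalCorrespondence} the mental state at $t$ equals $\Psi$ applied to the physical state on $R$ at $t$, and the latter is jointly determined by its restriction to $A$ and its restriction to $B$. Since the substrate genuinely occupies all of $R$ (Observation \ref{ppMentalExtended}), the partition can be chosen so that each of $A$ and $B$ carries a non-redundant part of it; then there exist two physical states of $R$ that agree on $A$ but differ on $B$, and the Integration Property forbids writing the mental state as an aggregate of an independent ``$A$-contribution'' and a separate ``$B$-contribution'', so $\Psi$ must be sensitive to this difference. Hence the value of $\Psi$ --- which \emph{is} the mental state --- changes with the state in $B$ while $A$ is held fixed, and symmetrically with $A$; that is, it depends, instantaneously (it is the value at time $t$), on the spacelike separated contents of $A$ and $B$.

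For direct observability, I would invoke Definition \ref{def:observer} and Observation \ref{ppDirectObservation}: the directly observable properties are precisely those distinguished by $\Psi$, and a mental state is nothing but a value of $\Psi$ evaluated on its own underlying physical state, so it is the paradigm case of a direct record (Definition \ref{def:direct-record}) rather than an indirect one. Combining this with the dependence established above, the mental state satisfies both conditions of Definition \ref{def:nonlocality-O}, which is exactly the assertion that it is O-nonlocal, completing the proof.

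I expect the only real obstacle to be the dependence clause, specifically ruling out the possibility that the mental state is in fact supported by $A$ alone (or by $B$ alone), which would make the ``dependence on spacelike separated events'' vacuous. This is precisely where the Integration Property, rather than Principle \ref{ppPhysicalMentalCorrespondence} by itself, carries the argument: extension (Observation \ref{ppMentalExtended}) guarantees there is relevant substrate in both pieces of any sufficiently fine partition, and integration guarantees that $\Psi$ cannot decouple into local pieces in a way that would license denying the cross-dependence. A lesser subtlety is whether ``directly observable'' is properly attributed to the mental state taken as a whole rather than merely to the fine-grained properties feeding into it; I would handle this by noting that the mental state is literally the output of the very function $\Psi$ in terms of which direct observability is defined, so no further observation mechanism needs to be invoked.
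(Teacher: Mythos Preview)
Your proof is correct, but it is organized differently from the paper's and imports one hypothesis the paper deliberately keeps out of this theorem.

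The paper's proof does not use the $A/B$ partition or the Integration Property at all. Instead, inside the proof it introduces two auxiliary observations --- \emph{Instantism} (a system can have direct access only to its present state, a consequence of Metaprinciple~\ref{ppDynamicalSystem}) and \emph{Locality} (in a local theory, an event has direct access only to quantities at that event) --- and uses them to conclude that the underlying physical state is spacelike extended with mutually inaccessible parts. It then notes that the mental state, being $\Psi$ of that state (hence a coarse-graining), has direct access to it by Definition~\ref{def:direct-record}, which is a direct observation by Definition~\ref{def:observer}; since that state spans spacelike separated events, Definition~\ref{def:nonlocality-O} is satisfied. So the paper's route is: Observation~\ref{ppMentalExtended} $+$ Instantism $\Rightarrow$ spacelike extension; Principle~\ref{ppPhysicalMentalCorrespondence} $+$ Definitions~\ref{def:direct-record}, \ref{def:observer} $\Rightarrow$ direct observability; done.

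Your route reaches the same two clauses via Remark~\ref{rem:separation-integration} and the Integration Property. The extra care you take to rule out decoupling of $\Psi$ into independent $A$- and $B$-contributions is not actually needed for the theorem as stated: Observation~\ref{ppMentalExtended} already guarantees the substrate is genuinely extended, and mere dependence on a spatially extended instantaneous state suffices for Definition~\ref{def:nonlocality-O}. In the paper, the Integration Property is held in reserve for the \emph{next} step --- Corollary~\ref{thm:claims_contradiction} and the replies in \S\ref{a:objections-space} --- where it is used to argue that O-nonlocality is not merely the harmless C-nonlocality of Definition~\ref{def:nonlocality-C}. What you gain by invoking it here is a tighter, self-contained argument that already anticipates the ``so what?'' objection; what the paper gains by postponing it is a cleaner separation between the bare nonlocality claim (which needs only Principle~\ref{ppPhysicalMentalCorrespondence} and extension) and the stronger claim that this nonlocality is nonclassical (which needs integration).
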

\begin{proof}
I assume as obvious that the physical states underlying mental states are extended in space (Observation \ref{ppMentalExtended}).
An immediate consequence of Metaprinciple \ref{ppDynamicalSystem} (\nameref*{ppDynamicalSystem}) -- which is assumed by Principle \ref{ppPhysicalMentalCorrespondence} -- is that a system cannot directly access other states in the state space, not even its own previous states!
It can only have \emph{indirect access}, as in Definition \ref{def:direct-record}.

\begin{observation}[Instantism]
\label{ppInstantism}
Any system can have direct access only to its present state (Fig. \ref{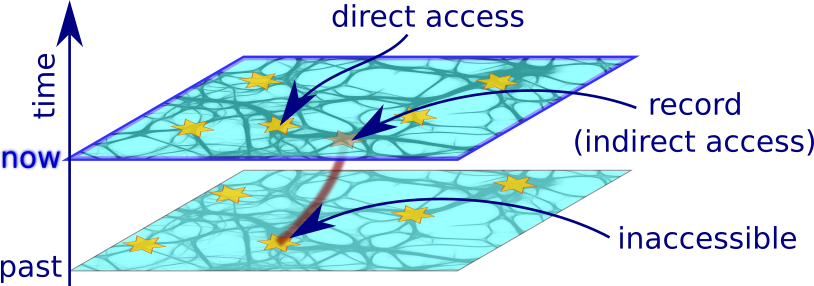}).
\end{observation}

\image{mental-states-access2past.png}{0.45}{A physical system can access directly only its present state. Consequently, a mental state can access the past only indirectly, as records of the past physical states that exist in the present physical state underlying that mental state.}

At first sight, this is a very trivial observation. But in fact its implications are too often overlooked.
Even if it is trivially true, the reader may object that, if we can only access our present state, then how is it even possible to remember the past? I will come back to this known issue in Appendix \sref{a:objections}, Objection \ref{obj:records}.

In a local theory, an \emph{event} -- which consists of its position and moment of time -- can have direct access only to itself.
In relativistic theories, even indirect access is limited to current records of the events from the \emph{past lightcones} of the current event, as shown in Fig.  \ref{mental-states-access2past.png}. 
Simultaneous events (with respect to a reference frame) are isolated, ``blind'' to one another. 
This leads to
\begin{observation}[Locality]
\label{ppLocality}
In a local theory, any event can have direct access only to the physical quantities located at that event.
\end{observation}

Fig. \ref{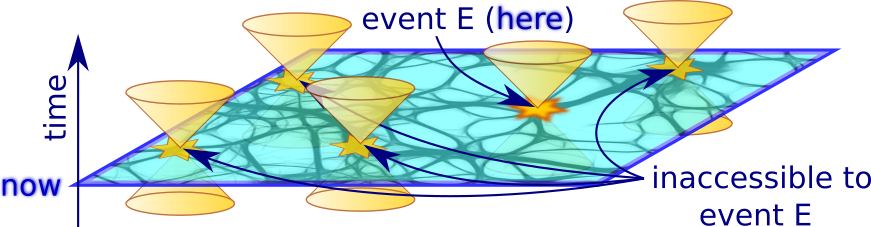} illustrates Observation \ref{ppLocality} and shows how the events composing the present state of a brain are isolated from one another in this sense.

\image{mental-states-nonlocal.png}{0.45}{Illustration of Observation \ref{ppLocality}, which shows that in a local theory, any event can have direct access only to the physical quantities located at that event.}

Let us now collect all the observation. The physical state underlying a mental state is, according to Observations \ref{ppMentalExtended} and \ref{ppInstantism}, extended in space in a spacelike way. By Observation \ref{ppLocality} its parts cannot, directly or indirectly, access other parts. Due to Principle \ref{ppPhysicalMentalCorrespondence}, the mental state is a coarse graining of the underlying physical state or coincides with it (in the case when $\Psi$ is the identigy function). By Definition \ref{def:direct-record}, the mental state has direct access to its underlying physical state, and by Definition \ref{def:observer}, this is a direct observation.
Since the mental state directly observes its underlying physical state and depends instantaneously on its spacelike separated events, Definition \ref{def:nonlocality-O} is satisfied, and mental states turn out to be O-nonlocal.
\end{proof}

Some possible objection to the idea that O-nonlocality may be nonclassical are discussed in \sref{a:objections-space}.
They all point out towards the idea that O-nonlocality may be actually a very classical property, and the replies appeal to the {Integration Property} to explain the difference.

\begin{remark}
\label{rem:refute_nonlocality}
Nothing can stop the reader to simply reject the {Integration Property}. Nothing can stop the reader, if she or he wishes, to imagine that there is no connection between regions $A$ and $B$ from Remark \ref{rem:separation-integration}, or between any two distinct points from region $R$, and then to claim that we can still have experience somehow without the need of some nonclassical O-nonlocality. I think this presumed experience is in fact a projection of the experience we have when we mentally simulate, in our own mental states, such a possibility. 
\end{remark}

Since Classical Physics is local, Theorem \ref{thm:nonlocality-O} implies
\begin{corollary}
\label{thm:claims_contradiction}
Assuming Observation \ref{ppIntegration}, Claim \ref{claim:mental-classical} cannot be true (unless, of course, Classical Physics is updated with a certain nonlocality able to yield the integrated experience associated to the mental state).
\end{corollary}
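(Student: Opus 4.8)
The plan is to argue by contradiction. I would assume Observation \ref{ppIntegration} (the Integration Property) together with Claim \ref{claim:mental-classical}, and derive a clash with the locality of Classical Physics. First I would note that Claim \ref{claim:mental-classical} is just the special case of Claim \ref{claim:mental2physical-reduction} in which the underlying physical states are states of a \emph{classical} dynamical system; hence Principle \ref{ppPhysicalMentalCorrespondence} holds, with the function $\Psi$ of eq.~\eqref{eq:psi-mental-states} defined on classical states. This is exactly the hypothesis of Theorem \ref{thm:nonlocality-O}, so I may conclude that mental states are O-nonlocal: by Definition \ref{def:nonlocality-O}, the mental state is a directly observable property (in the sense of Observation \ref{ppDirectObservation}) that depends instantaneously on spacelike separated events.

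Next I would invoke the locality of Classical Physics. Classical particle mechanics and classical field theory are local in the precise sense of Observation \ref{ppLocality}: each event has direct access only to the physical quantities located at that event, and merely indirect access to records propagated from its past lightcone, while simultaneous events are mutually ``blind''. Combining this with the Integration Property through Remark \ref{rem:separation-integration} — split the region $R$ carrying the mental state's physical correlates into disjoint, spacelike-separated pieces $A$ and $B$ — the integrated mental state must directly depend on, and be experienced as a single unified whole straddling, both $A$ and $B$. But in a local classical theory no event in $A$ has direct (nor, for simultaneous events, indirect) access to the quantities in $B$, so there is no classical carrier of a \emph{directly observed} property spanning $A$ and $B$ at one instant. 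That is precisely the O-nonlocality produced by the Theorem, and it contradicts classical locality. I would close by restating the parenthetical escape clause verbatim: the argument rules out Classical Physics only as currently formulated, and a classical-looking theory amended with the requisite nonlocality (in the spirit of the quantum analogy developed later in the paper) would not be excluded.

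The main obstacle I anticipate is not the logical skeleton, which is short, but making the incompatibility airtight: a property of the global classical state is still a perfectly legitimate mathematical function, so the tension enters solely through the ``directly observable'' clause of Definition \ref{def:nonlocality-O}. I would therefore stress that the Integration Property is doing the real work here — it forbids treating the mental state as a merely abstract function of the joint state that no subsystem ever accesses as a unit, and forces it to be a genuine direct observation in the sense of Observation \ref{ppDirectObservation}. Getting the reader to grant this last point (rather than retreating to Remark \ref{rem:refute_nonlocality}) is the delicate step; the rest is a two-line deduction from Theorem \ref{thm:nonlocality-O} and Observation \ref{ppLocality}.
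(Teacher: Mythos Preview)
Your proof is correct and follows essentially the same route as the paper: derive Principle \ref{ppPhysicalMentalCorrespondence} from Claim \ref{claim:mental-classical}, invoke Theorem \ref{thm:nonlocality-O} to obtain O-nonlocality, and observe that this clashes with the locality built into Classical Physics. The paper's own proof differs only cosmetically, organizing the argument as a two-case split on whether Metaprinciple \ref{ppDynamicalSystem} holds for brain physics (the negative branch being trivially incompatible with Claim \ref{claim:mental-classical}, since Classical Physics \emph{is} a dynamical system), whereas you fold that branch in by noting that assuming Claim \ref{claim:mental-classical} already places you in a classical dynamical system; your fuller unpacking of the final contradiction via Remark \ref{rem:separation-integration} and the role of the {Integration Property} is more explicit than the paper's one-line appeal to ``locality'', but not a different strategy.
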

\begin{proof}
I also assume Observation \ref{ppMentalExtended} as obvious. I consider each of two cases.

\emph{Case I: Metaprinciple \ref{ppDynamicalSystem} is false for the brain physics}. Since Metaprinciple \ref{ppDynamicalSystem} is true in Classical Physics, this case is inconsistent with Claim \ref{claim:mental-classical}.

\emph{Case II: Metaprinciple \ref{ppDynamicalSystem} is true}. Since Claim \ref{claim:mental-classical} implies Principle \ref{ppPhysicalMentalCorrespondence}, we can apply Theorem \ref{thm:nonlocality-O}, according to which mental states are O-nonlocal. This is again inconsistent with Claim \ref{claim:mental-classical}, which requires locality.
\end{proof}

Since Claim \ref{claim:mental-classical-computer} implies Claim \ref{claim:mental-classical}, Corollary \ref{thm:claims_contradiction} also rejects Claim \ref{claim:mental-classical-computer}.

\begin{corollary}
\label{thm:quantum_mind}
In a world where the only nonlocality is quantum, the {Integration Property} and Claim \ref{claim:mental2physical-reduction} require that any system supporting mental states has to use quantum effects.
\end{corollary}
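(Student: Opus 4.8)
The plan is to argue by contradiction, reducing the statement to Theorem \ref{thm:nonlocality-O} and Corollary \ref{thm:claims_contradiction}. Suppose a physical system $S$ supports mental states but does \emph{not} use quantum effects, \ie the physics relevant to the functioning of $S$ is, to the needed degree of approximation, classical. The first step is to observe that Claim \ref{claim:mental2physical-reduction}, applied to such an $S$, collapses to Claim \ref{claim:mental-classical} (classical physicalism) \emph{for $S$}: if the mental states reduce to and are determined by the physics of $S$, and that physics is classical, then Quantum Physics is by assumption irrelevant for these particular mental processes, which is exactly Claim \ref{claim:mental-classical}.

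Next I would invoke Corollary \ref{thm:claims_contradiction}: assuming the Integration Property (Observation \ref{ppIntegration}), Claim \ref{claim:mental-classical} cannot hold \emph{unless Classical Physics is supplemented with a nonlocality able to yield the integrated experience}. This is where the hypothesis of the present Corollary does its work: in a world whose \emph{only} nonlocality is quantum, there is no such classical supplement available, so the escape clause of Corollary \ref{thm:claims_contradiction} is closed and Claim \ref{claim:mental-classical} for $S$ is outright false. But by hypothesis the Integration Property and Claim \ref{claim:mental2physical-reduction} both hold, and we just saw that the latter yields Claim \ref{claim:mental-classical} for $S$ --- a contradiction. Hence $S$ must use quantum effects.

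Equivalently, one can run the argument positively. By Theorem \ref{thm:nonlocality-O} --- whose proof already uses the Integration Property through Remark \ref{rem:separation-integration} and Observation \ref{ppMentalExtended} --- Claim \ref{claim:mental2physical-reduction} forces the mental state to be O-nonlocal: it is a directly observable property of $S$ that depends instantaneously on spacelike separated events inside the region $R$ carrying its physical correlates. O-nonlocality (Definition \ref{def:nonlocality-O}) is a species of nonlocality, and in the world we are considering the only nonlocality is quantum; a purely classical description of $S$ is local (Observation \ref{ppLocality}) and cannot produce such an instantaneous spacelike dependence, so $S$ must exploit quantum effects.

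The main obstacle I expect is conceptual rather than technical: making precise what it means for a system to \quot{use quantum effects} as opposed to being adequately describable classically, and ensuring the implication \quot{classical $\Rightarrow$ local} has no loophole, \ie no non-quantum yet still nonlocal resource. That loophole is precisely what the hypothesis of this Corollary excludes by fiat, so the proof hinges on granting that premise. I would also flag the subtlety that \quot{does not use quantum effects} must be read at the level of the functionally relevant degrees of freedom of $S$, not as the metaphysical claim that the underlying world is classical --- the latter is already dispatched in Case I of the proof of Corollary \ref{thm:claims_contradiction} via Metaprinciple \ref{ppDynamicalSystem}.
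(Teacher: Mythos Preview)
Your proposal is correct, and your second, positive argument is essentially the paper's own proof: the paper simply writes ``Follows directly from Theorem \ref{thm:nonlocality-O},'' which is exactly your chain --- Theorem \ref{thm:nonlocality-O} gives O-nonlocality of the mental state, the hypothesis identifies any such nonlocality as quantum, hence the system uses quantum effects. Your first route via Corollary \ref{thm:claims_contradiction} is a slightly more elaborate decomposition (and your remarks on closing the escape clause and on reading ``uses quantum effects'' at the level of functionally relevant degrees of freedom are helpful clarifications the paper does not spell out), but it is not a genuinely different argument --- Corollary \ref{thm:claims_contradiction} is itself just Theorem \ref{thm:nonlocality-O} plus the locality of Classical Physics, so both routes collapse to the same one-line inference.
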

\begin{proof}
Follows directly from Theorem \ref{thm:nonlocality-O}.
\end{proof}

The reader who wants to save Claims \ref{claim:mental-classical} and \ref{claim:mental-classical-computer} can try to reject the {Integration Property}, and assume therefore that mental states are as scattered and spacelike separated like the physical states grounding them are. Some proposed alternatives allowing this are discussed in Appendix \sref{a:options}.

%------------------------------------------------------------%
\section{Can the nonlocality of mental states be quantum?}
\label{s:quantum}

In this section I discuss the possibility that the kind of nonlocality revealed in Sections \sref{s:proof} and \sref{s:thought-experiment} may be of quantum origin. 
This possibility is justified by the fact that we do not know of other nonlocal physics than that of quantum origin, see Corollary \ref{thm:quantum_mind}.

For reasons independent from Theorem \ref{thm:nonlocality-O}, the idea that Quantum Physics has something to do with consciousness is nearly as old as Quantum Physics itself, the main attractive features being 
\begin{enumerate}
	\item 
the violation of determinism by the wavefunction collapse, considered by \emph{incompatibilists} to be necessary for free-will,
	\item 
the apparent necessity of mentioning an observer even in the fundamental postulates of Standard Quantum Physics (but see \cite{Sto2020StandardQuantumMechanicsWithoutObservers} for a version of Standard Quantum Physics without observers),
	\item 
advantages of quantum computability over the classical one, by using superposition and entanglement.
\end{enumerate}

Numerous ideas connecting consciousness with Quantum Physics were put forward, in particular by London and Bauer \cite{LondonBauer1983TranslationOf1939QuantumMind}, Heisenberg \citep{Heisenberg1958PhysicsAndPhilosophy}, von Neumann \citep{vonNeumann1955MathFoundationsQM}, Wigner \citep{Wigner1961WignerFriend}, Penrose and Hameroff \citep{HameroffPenrose2017OrchORUpdatedReview}, Stapp \citep{Stapp2004QuantumTheoryOfMindBrainInterface,Stapp2015QuantumMechanicsMindBrainConnection}, and others \citep{SEP-qt-consciousness}.

The idea that quantum effects in the brain may be relevant for the mind was criticized in \cite{tegmark2000decoherenceBrain} and \cite{KochHepp2006QuantumMechanicsInTheBrain,Hepp2012CoherenceAndDecoherenceInTheBrain}. The decoherence times in a ``warm, wet and noisy'' environment like the brain was estimated to less than $10^{-21}$ seconds \cite{tegmark2000decoherenceBrain}, but in \cite{HaganHameroffTuszynski2002QuantumComputationInBrainMicrotubulesDecoherenceAndBiologicalFEasibility,Hameroff2006ConsciousnessNeurobiologyAndQuantumMechanics} it was argued that the corrected version gave $10-100$ microseconds, with a possibility of going up to $10-100$ milliseconds.
Moreover, evidence for maintained quantum coherence in a ``warm, wet and noisy'' environment was found later, although for photosynthesis \cite{EngelEtAl2007EvidenceQuantumCoherencePhotosynthesis,PanitchayangkoonEtAl2010LongLivedQuantumCoherencePhotosynthesis}. A simple ``recoherence'' mechanisms able to maintain coherence in an ``open and noisy'' quantum system like the brain was found in \cite{HartmannDurBriegel2006SteadyStateEntanglementInOpenNoisyQuantumSystems,LiParaoanu2009GenerationAndPropagationOfEntanglementInDrivenCoupledQubitSystems}. Other quantum features, expected to be associated to macroscopic quantum systems, were found in \emph{microtubules} \cite{Sahu2013MultiLevelMemorySwitchingPropertiesSingleBrainMicrotubule}.
But the jury is still out.

If we accept the conditions leading to Corollary \ref{thm:quantum_mind} from Sec. \sref{s:proof}, we have to accept that the O-nonlocality of mental states is due to Quantum Physics. Alternatively, we can seek for other explanations of O-nonlocality. We may consider new physics or even something nonphysical that does not aim to explain sentient experience, but merely O-nonlocality. But whatever that ``something nonphysical'' may be, if the way it provides O-nonlocality can be consistently described by a set of propositions, it can be described mathematically, and can be very well incorporated into physicalism \cite{Sto15c,Sto2020NegativeWayToSentience}. And the resulting physicalism will not be classical.

So let us see what quantum effects can support the O-nonlocality of mental states. Quantum nonlocality is the result of the conjunction of \emph{quantum entanglement} with the projection that seems to take place during measurements. Nonlocal correlations like those violating the Bell inequality are the consequence of the conjunction of these two features of Quantum Physics.

I will first describe a sense in which the O-nonlocality of the function $\Psi$ from eq. \eqref{eq:psi-mental-states}, required by Principle \ref{ppPhysicalMentalCorrespondence} in conjunction with the {Integration Property}, is analogous to \emph{quantum entanglement}. I will also explain that this analogy does not seem to be perfect.

\emph{Nonrelativistic Quantum Physics} is obtained from a classical dynamical system by \emph{quantization}. While the state of the classical system is a point in its state space, in the quantized theory it is a function defined on the classical state space, so a function of the possible classical states. For example, if the classical system consists of $\n$ (point-)particles, its state space is the \emph{phase space} $\R^{6\n}$, whose points are $(\x,\p)=(\x_1,\ldots,\x_\n,\p_1\ldots,\p_n)$, where $\x_j$ is the position of the $j$-th particle, and $\p_j$ its momentum. A quantum state is then represented by a real function $W$ on the phase space $\R^{6\n}$, called \emph{Wigner function} \cite{Wigner1932PhaseSpaceQM,Baker1958FormulationofQMPhaseSpace}. But the more commonly used representation is as a complex function $\psi$, called \emph{wavefunction}, defined on the classical \emph{configuration space} $\R^{3\n}$ of all possible positions $\x$ of $\n$ particles. The quantization procedure also replaces the classical evolution law with a quantum one, so the Hamilton equations satisfied by the classical system are replaced with a {\schrod} equation satisfied by $\psi$. The wavefunction $\psi$ depends on the position $\x$, but it also depends implicitly on the momentum $\p$, as we can see by applying the Fourier transform.

Entanglement is due to the fact that $\psi$ is a function of the positions of all particles, and it cannot be seen as more functions of the position of each particle $\psi_1(\x_1),\ldots,\psi_\n(\x_\n)$, except in special cases. In general, it is a linear combination of possibly infinitely many such products.

The analogy between the function $\Psi$ from eq. \eqref{eq:psi-mental-states} and the wavefunction $\psi$, or its Wigner transform $W(\psi)$, is that both depend on the arrangements of the particles in space. Let us represent this analogy in an admittedly quite vague way,
\begin{equation}
\label{eq:Psi-and-psi}
\Psi\(\tn{\texttt{physical\_state}}\) \tn{ is like }\psi\(\tn{\texttt{physical\_state}}\)
\end{equation}

This analogy suggests a possible relation between the O-nonlocality of the mental states, which we just inferred from Theorem \ref{thm:nonlocality-O}, and quantum nonlocality. But we should be careful that there are some differences. First, the argument of $\Psi$, \tn{\texttt{physical\_state}}, represents a physical state in our world, which is quantum. This state is not classical. Even if under Claim \ref{claim:mental-classical} we assume it to be quasiclassical, it is not the same as the classical state which is the argument of the wavefunction $\psi$. However, it is considered plausible that the classical states emerge approximately as a limit of the quantum states in regimes where entanglement is almost absent and the wavefunction $\psi$ is concentrated mostly around, and highly peaked at a point $\x$ in the configuration space.
This relation actually makes some sense, for the following reason. First, the classical state space $\R^{6\n}$ on which the wavefunction is defined assumes the classical particles to be pointlike. But in Quantum Field Theory (QFT), this state space has to be replaced with a space of classical fields. A way to obtain QFT is by the so-called ``second quantization'', which takes the wavefunction and quantizes it similarly to how classical systems of point-particles are quantized. This improves the analogy expressed vaguely in eq. \eqref{eq:Psi-and-psi}.

There is also a difference in the values the two functions $\Psi$ and $\psi$ can take.
The values of $\psi$ are complex numbers. The real number $\abs{\psi(\x)}^2$ represents ``how much'' of $\psi$ is concentrated at the point $\x$ in the configuration space. When a measurement of the positions of all particles is performed, this ``how much'' becomes\footnote{This ``metamorphosis'' of the squared amplitude into probabilities is one of the foundational problems of Quantum Physics which various interpretations try to solve.} the probability that the wavefunction collapses, as a result of the measurement, at the position $\x$ in the configuration space.
On the other hand, the value of $\Psi$ is a mental state. Therefore, if there is a relation like in eq. \eqref{eq:Psi-and-psi} between the values of $\psi$ and those of $\Psi$, it is evidently a serious open problem how to interpret it.

In Quantum Physics, \emph{nonlocal correlations} are made possible by entanglement, but they become manifest only through quantum measurements, because these correlations are between possible outcomes of the measurements of entangled systems. 
In general we can not directly observe the quantum state. We observe by making quantum measurements, which involve a collapse of the wavefunction, so we can at best learn the collapsed state, not the original one.

On the other hand, for mental states to affect objectively observable physical states, or to extract information about these states, some mechanism is required to endow the mental state with \emph{causal powers}. If eq. \eqref{eq:Psi-and-psi} reflects a true relation between the mental states and the wavefunction, then one may speculate that mental states affect objectively observable physical states by the same wavefunction collapse. Maybe the collapse of the wavefunction, which is postulated to explain both the outcomes of quantum measurements, and the emergence of the classical world, is the way. But, in the absence of an understanding of consciousness, these analogies are wild speculations.

However, even if the relation from eq. \eqref{eq:Psi-and-psi} remains unclear, the analogies between the O-nonlocality of $\Psi$ and the entanglement of $\psi$ (and the nonlocal correlations encoded in it) invite us to experiment in this direction. A first necessary step is to search the possible places in the brain that can maintain coherence, so that, when the wavefunction collapse happens, this can be amplified well enough to result in observable differences in the states of the neurons.

If quantum effects play a role in the brain is still an open question, but they better play a significant one, able to support O-nonlocality, because otherwise we will have to find other ways to support it. And it is difficult to find other ways, because we will also have to make sure that these ways do not allow faster than light or back in time signaling. In Quantum Physics there are theorems that prevent these kinds of signaling, so assuming that the O-nonlocality of mental states is quantum seems a better choice than assuming additional physical laws, and then trying to enforce on them no-signaling.

Regardless of the solution, this by itself does not necessarily solve the Hard Problem of consciousness.

%------------------------------------------------------------%
\section{Implications}
\label{s:implications}

The purpose of this article is to discuss the topic of mental states being O-nonlocal. Whatever its implications are, they are not yet developed. So I try though to discuss some possible consequences. 

The first implication pertains quantum nonlocality.
\begin{implication}[quantum in the brain]
\label{implication:quantum}
To avoid postulating new physics, it is reasonable to explore the possibility that Quantum Physics is important for the brain's functioning, and that it may be related to consciousness.
This was discussed in Sec. \sref{s:quantum}. 
If the nature of O-nonlocality is quantum, it should have physical effects, in principle detectable in the brain (see Objection \ref{obj:epiphenomenalism} and the reply).

A possible type of experiment, admittedly not within the reach of present technology, would be to shut down all quantum effects in the brain, \eg entanglement, and see how this affects the mental states. If the classical is all that is needed, one should be able to do this without affecting the mental states.
Possible false negatives can be obtained if epiphenomenalism is true, but I think this position is not interesting, as explained in the reply to Objection \ref{obj:epiphenomenalism}.
I am completely unable to indicate where these quantum effects are to be found in the brain, and what we can do to stop them without affecting the classical approximation of the brain, but I think that by the advance of technology we will be able to learn more about the brain and how to make such experiments.

But maybe some evidence is already present. 
Quantum probability turns out to be more appropriate to model decision making, reasoning, similarity, categorization, and other cognitive processes, which are harder to reconcile with classical (Bayesian) probability, as shown in the extensive review \cite{PothosAndBusemeyer2013CanQuantumProbabilityNewDirectionCognitiveModeling}. According to the authors, ``all these characteristics appear perplexing from a classical perspective. Yet our thesis is that they provide a more accurate and powerful account of certain cognitive processes.``
A way to test for violations of generalized Bell inequalities in human decision making was reported in \cite{CervantesDzhafarov2018SnowQueenIsEvilAndBeautifulExperimentalEvidenceProbabilisticContextualityInHumanChoices}. According to the authors, ``unambiguous experimental evidence for (quantum-like) probabilistic contextuality in psychology'' was found (although in \cite{Atmanspacher2019ContextualityRevisitedSignalingMayDifferFromCommunicating} a loophole is identified).
Follow-up research was presented in \cite{BasievaEtAl2019TrueContextualityBeatsDirectInfluencesInHumanDecisionMaking} and references therein, where contextuality was used. The authors report ``a series of crowdsourcing experiments that exhibit true contextuality in simple decision making''.
All these results show that quantum probability of the type exhibited in entangled systems are more appropriate to model cognitive processes. However, until they are supplemented by a detailed analysis of the mechanisms implementing contextuality and other aspects of quantum probability, which is a very difficult task, I would speculate too much by assuming that its origin is quantum and that they count as evidence for the arguments presented in this article. It seems more plausible that correlations of the same type are due to the shared functionality of the weights of the neural network (although the results are very interesting even if this is the case).

More direct evidence of quantum effects is needed, and promising results are reported in the study of microtubules, as shown in Hameroff and Penrose's review article \cite{HameroffPenrose2017OrchORUpdatedReview}. Their suggestion that anesthesia can shut down consciousness by reducing the quantum effects is strikingly along the lines of the type of experimental results advocated earlier in this Implication to be expected from the arguments in this article.

Despite apparently decisive theoretical arguments that decoherence would prevent any relevant superposition in ``warm, wet and noisy'' systems \cite{tegmark2000decoherenceBrain}, new evidence shows that quantum coherennce can be maintained in such environments, at least for photosynthesis \cite{EngelEtAl2007EvidenceQuantumCoherencePhotosynthesis, PanitchayangkoonEtAl2010LongLivedQuantumCoherencePhotosynthesis}. In the case of the brain, the possibility that nuclear spin quantum processing is possible, by using entangled Posner molecules, was proposed in \cite{MatthewFisher2015QuantumCognitionProcessingWithNuclearSpinsInTheBrain,WeingartenDoraiswamyFisher2016NewSpinOnNeuralProcessingQuantumCognition,JenniferOuellette2016NewSpinOnTheQuantumBrain}.

Nevertheless, even if the quantum will turn out to be necessary for mental states, I think that classical computationalism will still remain very useful in the way Classical Physics is still useful.

While most biological cells, including neurons, do not seem to base their functionality essentially on quantum effects, if this is the case for neurons, it may as well be the case for other types of cells, given that neurons themselves, like the other cells, appeared by specializing from the same original cell.
\end{implication}

Another implication is related to ``objective measures'' of consciousness, defined for physical systems.
\begin{implication}[Integrated Information Theory]
\label{implication:IIT}
In \cite{Tononi2016IntegratedInformationTheory}, it was proposed that consciousness arises from integrated information, which is measured by a function $\Phi$ (not the same as $\Psi$ or $\psi$). This is a measure of how much additional information a system has compared to its subsystems, under certain conditions that are considered relevant to distinguish consciousness from other forms of information. The particular definition was criticized, for example in \cite{Aaronson2014PrettyHardProblemOfConsciousnessWhyIAmNotIIT}.

But integrated information \emph{per se} is an abstract notion, as long as it is not experienced subjectively, ``from within''. It is a property that we, conscious beings, attribute to systems, based on how we assign information to these systems. Theorem \ref{thm:nonlocality-O} shows that mental states are O-nonlocal, so there has to be a (likely) physical way to connect the parts of the system, in order to really integrate them. The implication is that integrated information, to be associated to mental states, requires O-nonlocality.
\end{implication}

Another implication is for the \emph{strong AI} thesis, according to which consciousness is purely computational, and we can create it artificially, on classical computers.
\begin{implication}[no classical strong AI]
\label{implication:strongAI}
It is no doubt that a simulation of the behavior or functionality of mental processes is possible, in principle. What seem to remain outside this possibility are the intrinsic, experiential aspects, although many consider that somehow these should ``emerge'' if we merely simulate the functionality. However, since Claim \ref{claim:mental-classical-computer} is refuted by Corollary \ref{thm:claims_contradiction}, and by the thought experiment from \sref{s:thought-experiment-digital}, consciousness cannot be reduced to classical computation. Hence, it cannot be simulated classically, and the strong AI thesis is 
challenged. And if epiphenomenalism is false (Objection \ref{obj:epiphenomenalism}) and even the simulation of mental processes require quantum effects, it may be possible to refute the Strong AI thesis.
But these results leave open the possibility to simulate mental states on quantum computers.
\end{implication}

\begin{implication}[block-world mental states]
\label{implication:relativity}
Let us go back to the thought experiment from Sec. \sref{s:thought-experiment}. Since mental states are O-nonlocal, then they depend on the observer, because of the \emph{relativity of simultaneity}. Two observers flying in different directions or with different velocities will have different simultaneity spaces, and in each of them the configurations of bits across the cosmic brain are different, so the associated mental states are observer-dependent.
This is a problem.
Indeed, by applying $\Psi$ to the succession of physical states as expressed in different reference frames one expects to obtain distinct successions of mental states. This seems to lead to the strange conclusion that there have to be potentially infinitely many subjective successions of mental states, possibly one for each possible reference frame. Of course, mental states are private, only their physical correlates are objectively verifiable. And if a person reports her own mental states to observers in different reference frames, they will receive the same report, and will agree about it. The private experience is of the person whose brain is observed, and not of the other observers, but the interpretation of the firings of neurons by different observers suggests different successions of mental states. 

One can consider that there is a preferred reference frame, most naturally brain's frame. But the thought experiment in \sref{s:thought-experiment-digital}, in which the space stations are in relative motion with respect to one another, suggests that there is no such frame. However, it is not inconceivable that there is a preferred way to slice spacetime into three-dimensional spaces, which is not objectively manifest.

Another possibility, consistent with Relativity, is that the changes in the mental states are slow enough so that in any reference frame the story will appear essentially the same. This requires that all changes of the mind state are in causal relation, \ie they are ordered so that each change is in the future lightcone of any past change. This total ordering is the situation from Reply \ref{reply:enforced-locality}, which leads to the nonlocality in time from Reply \ref{reply:nonlocality-in-time}.

Or maybe there is a kind of timeless block-world experience, a four-dimensional ``metamental state'', and slicing it in one reference frame or another yields usual mental experiences, related to the physical states by a function $\Psi$ which depends on time and on the reference frame.

Truth is, I do not have an answer to this problem.
The situation is complicated by the fact that the subjective perception of time is very different from the objective measures of time. We have different perception of simultaneity, of durations, and even of ordering of events. The mind seems to revise the perceptual data to construct a representation that makes sense to it, even if it defies the objective notion of time, as explained \eg in \cite{Dennett1993ConsciousnessExplained}, and shown very convincingly in \cite{vanWassenhove2009MindingTimeInAnAmodalRepresentationalSpace}. This by itself may be a challenge to Principle \ref{ppPhysicalMentalCorrespondence}. I am not sure how strong a challenge is, because I distinguish between how we perceive our own experiences, and the fact that we have them, as I extensively explained in the discussion following Observation \ref{ppIntegration}. Because of this distinction, I am also not sure if we can use the data about our illusionary perception and representation of time to decide among the possibilities discussed in this Implication.
\end{implication}

These are some immediate and more relevant implications of the O-nonlocality of mental states. The purpose of this article was only to point out this O-nonlocality, but further research is required to refine our understanding of both the O-nonlocality of mental states and processes, and of its consequences.

%------------------------------------------------------------%
\subsection*{Acknowledgement}
%\textbf{Acknowledgement.}
The author thanks Igor Salom, Iulian Toader, Johannes Kleiner, Larry Schulman, Cosmin Vișan, Jochen Szangolies, Per Arve, David Chalmers, and the anonymous reviewers, for their valuable feedback, which was not always of agreement, offered to previous versions of the manuscript. The author bares full responsibility for the article.

%\part*{Appendices}

\appendix

%------------------------------------------------------------%
\section{Possible objections}
\label{a:objections}

Here I will discuss some of the possible objections
raised by myself or others against my arguments.
I hope that by this I anticipate the most important objections the reader may have, and address them convincingly.

%------------------------------------------------------------%
\subsection{Basic objections}
\label{a:objections-basic}

\begin{objection}
\label{obj:PhysicalMentalCorrespondence}
The problem is that the argument starts from Principle \ref{ppPhysicalMentalCorrespondence}. Why is this assumption justified?
\end{objection}
\begin{reply}
\label{reply:PhysicalMentalCorrespondence}
Principle \ref{ppPhysicalMentalCorrespondence} is simply the reductionist claim that mental states are function of the physical states. It is the minimal assumption about mental states, it does not claim anything about their nature, only that they are function of the physical states. Even if one is not a reductionist, even if one is a dualist, one cannot deny that mental states have physical correlates, so one cannot deny Principle \ref{ppPhysicalMentalCorrespondence} at least for those properties of the mental states that have physical correlates. And by this, according to Theorem \ref{thm:nonlocality-O}, follows that mental states are O-nonlocal, even if one tries to escape physicalism.

Let us verify that the major physicalist positions about the mental processes satisfy Principle \ref{ppPhysicalMentalCorrespondence}.
Blackmore and Tro{\'s}cianko say ``Materialism includes identity theory (which makes mental states identical with brain states) and functionalism
(which equates mental states with functional states)`` \cite{BlackmoreTroscianko2018ConsciousnessAnIntroduction}. 

We start with \emph{functionalism}, the position that only the function and the causal relations matter, and the particular implementation is irrelevant. More precisely, ``functionalist theories take the identity of a mental state to be determined by its causal relations to sensory stimulation, other mental states, and behavior'' \cite{sep-functionalism}. According to Piccinini, ``[t]o a first approximation, functionalism is the metaphysical view that mental states are individuated by their functional relations with mental inputs, outputs, and other mental states'' \cite{Piccinini2004FunctionalismComputationalismAndMentalStates}. Searle characterizes it briefly as ``Mental states are defined by their causal relations'' \cite{Searle1992RediscoveryOfMind}.
According to Goff \cite{Goff2017ConsciousnessAndFundamentalReality}, ``[b]ehaviorists or functionalists believe that the nature of a mental state can be completely captured in causal terms. Causal structuralists generalize this model to the whole of reality, resulting in a kind of \emph{metaphysical behaviorism}. Things are not so much \emph{beings} as \emph{doings}. Pure physicalism is a form of this view.''
If we can represent the functions and causal relations like the \emph{flowchart} of an algorithm or a process, we treat each block and even the algorithm or the process itself as a black box. But any implementation of the functions and the causal relations is done in practice as a process, so Principle \ref{ppPhysicalMentalCorrespondence} holds for functionalism. Moreover, functionalism relies only on classical relations and functions, so it supports Claim \ref{claim:mental-classical}.

\emph{Computationalism} asserts, in addition, that the relations between inputs and outputs are computational \cite{Piccinini2004FunctionalismComputationalismAndMentalStates}. Searle describes the \emph{strong Artificial Intelligence thesis} as ``Mental states are computational states'' \cite{Searle1992RediscoveryOfMind}. Piccinini \cite{Piccinini2004FunctionalismComputationalismAndMentalStates} says that ``[c]omputationalism [...] is precisely the hypothesis that the functional relations between mental inputs, outputs, and internal states are computational. Computationalism per se is neutral on whether those computational relations constitute the nature of mental states.'' Here, for simplicity and because this became the general usage, by \emph{computationalism} I  understand only the ``strong'' version of computationalism, according to which mental states reduce to computation. This strong computationalism can therefore be characterized as the position supporting Claim \ref{claim:mental-classical-computer}.
Since we focus on physicalism, according to Claim \ref{claim:mental2physical-reduction}, functionality is implemented as a physical process, which is a succession of states, and a function like \eqref{eq:psi-mental-states} relates the underlying physical state to the mental state. Hence, both functionalism and computationalism admit a relation between the physical correlate and the corresponding mental state like in eq. \eqref{eq:psi-mental-states}. This should not be a surprise, since from Metaprinciple \ref{ppDynamicalSystem} follows that any physicalist theory of mind has to imply such a relation.

A trending position is that of \emph{externalism} with regard to mental content, according to which ``in order to have certain types of intentional mental states (\eg beliefs), it is necessary to be related to the environment in the right way'' \cite{sep-content-externalism}, and ``[e]xternalism in the philosophy of mind contends that the meaning or content of a thought is partly determined by the environment'' \cite{sep-self-knowledge-externalism}. This position extends the physical state underlying the mental state, but this does not affect Principle \ref{ppPhysicalMentalCorrespondence}. It is correct at least that the interpretation of the relations between the mental representations and the outside objects require, well, these objects, the environment, to be taken into account, as illustrated by Putnam's \emph{Twin Earth thought experiment} \cite{Putnam1975TheMeaningOfMeaning}. This does not imply though that the mental states themselves, as experienced, and not as interpreted in relation with the environment by a third person who tries to give them an intentional or relational meaning in the context of the outside world, have to depend on the environment. But even if it would be true, it is hard to see how externalism can be a physicalist position and at the same time remain local, and some argue that it is not (see \eg \cite{Dretske1997NaturalizingTheMind}, Ch. 5). However, the proof in this paper applies to externalism too, and makes it clear that it has to be nonlocal, endorsing a possible externalist position which requires nonlocality.
Other positions that advocate for nonlocality are known in the literature \cite{LahavShanks1992HowToBeAScientificallyRespectablePropertyDualist,ChrisClarke1995TheNonlocalityOfMind,VanLommel2013NonLocalConsciousness,HaeslerBeauregard2013NDENonlocalMind,HameroffPenrose2017OrchORUpdatedReview,Pylkkanen2018QuantumTheoriesOfConsciousness}, and they can find additional support in the proof I gave here.

In the most general types of physicalism, the function $\Psi$ identifies mental states with coarse-grainings of physical states, \ie with states, possible macro states, that may ignore differences of the microphysical or fundamental states that grounds those mental states. More precisely, two physical states $\tn{\texttt{physical\_state\_1}}$ and $\tn{\texttt{physical\_state\_2}}$ are not distinguished by the function $\Psi$, if $\Psi\(\tn{\texttt{physical\_state\_1}}\)=\Psi\(\tn{\texttt{physical\_state\_2}}\)$.
The position identifying mental states with physical states is a particular case, where $\Psi$ is the identity function.

Other possible objections to Principle \ref{ppPhysicalMentalCorrespondence} can make use of the fact that consciousness integrates visual and auditory stimuli by delaying them to compensate for the different durations of transmissions through the nervous system, and that we are unable to distinguish them instantaneously. But regardless of the particular implementation, Principle \ref{ppPhysicalMentalCorrespondence} refers to the current mental state, even if it emerges with a delay with respect to the stimuli, and even if it does not have a very good resolution of time.
\end{reply}

\begin{objection}
\label{obj:paranormal}
Are you claiming that mental states are paranormal?
\end{objection}
\begin{reply}
\label{reply:paranormal}
There is no paranormal assumption or claim here.
Such misunderstandings are common, and they prompted for example Searle \cite{Searle1992RediscoveryOfMind} to write
\begin{quote}
I have, personally speaking, been accused of holding some crazy doctrine of ``property dualism'' and ``privileged access,'' or believing in ``introspection'' or ``neovitalism'' or even ``mysticism,'' even though I have never, implicitly or explicitly, endorsed any of these views. [...] They think the only real choices available are between some form of materialism and some form of dualism.
\end{quote}

But I do not even reject here the physicalist Claim \ref{claim:mental2physical-reduction}, only the classical physicalist Claims \ref{claim:mental-classical} and \ref{claim:mental-classical-computer}.

Theorem \ref{thm:nonlocality-O} is not a claim I made out of thin air, it is simply the straightforward conclusion of the reductionist Principle \ref{ppPhysicalMentalCorrespondence}. It is not a claim, but a proof that mental states are O-nonlocal, and I explained in what sense and under what condition this O-nonlocality is nonclassical -- that the {Integration Property} holds, which the reader is free to reject. 
Like in the case of quantum nonlocality, the argument itself does not imply that O-nonlocality can be used for faster-than-light signaling or back-in-time signaling. When a mental state is reported by the subject, the resulting message is encoded in a physical state. The physical system arrives in that state through interactions according to the laws of physics.
So nothing ``paranormal'', nothing ``supernatural'' is predicted to be observed.
\end{reply}

\begin{objection}
\label{obj:explain}
Can you explain the role of this O-nonlocality in the mental states? Can you show where in the brain O-nonlocality enters to produce the mental states?
\end{objection}
\begin{reply}
\label{reply:explain}
Theorem \ref{thm:nonlocality-O} proves O-nonlocality, but not how it works to yield the mental states, neither where and how to look for nonlocality in the brain.
This can be the objective of future research. For the moment, it is important to know that O-nonlocality is needed for mental states, but how and why it contributes to mental states remains to be investigated.
\end{reply}

The following Objection comes from a Reviewer.
\begin{objection}[Epiphenomenalism]
\label{obj:epiphenomenalism}
Even if we grant that mental states exist and are integrated (hence nonlocal), they are not necessarily causal. Therefore there is no challenge to physicalism or computationalism, which by definition hold that the physical aspects of thought and behavior are all that matter.
\end{objection}
\begin{reply}
\label{reply:epiphenomenalism}
In my argument, I did not make use of the causal powers of the mental states, only of their determination by the underlying physical states. For this reason, I think these arguments apply even if we assume epiphenomenalism.
This being said, it is indeed logically possible that quantum effects exist in the brain with the only reason to allow the integrated experience of mental states. The functional and computational approaches to mind would not be affected, because the measurable outputs would be the same. The absence of causal powers of the O-nonlocality involved would make it epiphenomenal, in the sense of lacking causal powers. If we take the reporting of the integration of subjective experience as evidence for O-nonlocality, and if the nature of this nonlocality is quantum, it should be able to make empirical predictions. Such effects should be, in principle, detectable by experiments, even if the same functional or computational models can be implemented alternatively based on Classical Physics. A possible type of experiment, definitely too advanced for the current state of technology, can use the elimination of all possibilities of entanglement in the brain in order to see how this affects the mental states.

On a personal note, I think that arguments like the ones in this paper could have led us to the idea of nonlocality even before the discovery of quantum nonlocality. In this case, quantum nonlocality would have appeared to some as an empirical corroboration of Theorem \ref{thm:nonlocality-O}, to be further confirmed by discovering relevant quantum phenomena in the brain. On the other hand, quantum indeterminacy  also appeared to some as a confirmation of the thesis of libertarian free-will, yet there is no consensus in the scientific and philosophical communities that this indeed is the case.
\end{reply}

\begin{objection}
\label{obj:counterexample}
Physicalism, in particular computationalism, define mental states as the states of the underlying physical systems. This gives a counterexample to your arguments.
\end{objection}
\begin{reply}
\label{reply:counterexample}
My argument relies on the assumption that mental states have the {Integration Property} \ref{ppIntegration}. Their conclusion can be avoided by rejecting this assumption, and two alternatives are Options \ref{opt:instrumentalism} and \ref{opt:no-mental-states}, discussed in Appendix \sref{a:options}.
By contrast, people who consider mental states real may reach the opposite conclusion, and see my arguments as a counterexample to the theories from Objection \ref{obj:counterexample}.
\end{reply}

%------------------------------------------------------------%
\subsection{Objections related to extension in space}
\label{a:objections-space}

\begin{objection}
\label{obj:global}
Why should O-nonlocality necessarily be nonclassical? After all, there are nonlocal properties, in the sense that they depend on spacelike separated subsystems, even in local theories. For example, the total charge, mass, energy, and momentum of a system depend on all the constituents of the system, or on the values of its fields at all points it occupies in space. Similarly, the center of mass. 
Temperature is proportional to the average molecular kinetic energy over a large number of particles constituting the gas or liquid or solid.
\end{objection}

Before replying to Objection \ref{obj:global}, let us define ``classical nonlocality'',

\begin{definition}[C-nonlocality]
\label{def:nonlocality-C}
A property of a system is called \emph{C-nonlocal} if it depends on spacelike separated subsystems.
\end{definition}

\begin{remark}
\label{rem:nonlocality-C}
We see from Definitions \ref{def:nonlocality-C} and \ref{def:nonlocality-O} that an O-nonlocal property is a directly observable C-property.
\end{remark}

\begin{reply}[to Objection \ref{obj:global}]
\label{reply:global}
Local theories allow C-nonlocal properties, but such theories are still local, because \emph{their processes can be fully described in terms of the local properties, without ever mentioning the C-nonlocal ones}.

From the examples given in Objection \ref{obj:global}, let us discuss temperature, which is observable, though not directly observable. It is hard to see how temperature, which is successfully described in a reductionist way in terms of Newtonian Mechanics, is a sort of illusion. It is real, and since the average is taken over a large volume, it is C-nonlocal. But in fact its effects, in particular on the skin of a human being who may feel burnt, or on the mercury rising in a thermometer, are completely local effects. It is only in our mind that these effects are integrated and unified, \cf Observation \ref{ppIntegration}. The skin is really burnt, but this effect is local. The mercury indicating the temperature rises in the thermometer, but everything happens through local interactions, including the perception of the level of the mercury we may have. The only integration of this information into the idea of temperature takes place in our minds. Mind only has indirect access to quantities like total charge or temperature. But mind experiences directly its own state, and this is where the difference becomes relevant. Therefore, ultimately, according to Definition \ref{def:observer}, even in these examples, mental states are the O-nonlocal ones.

It is hard to say that for the total mass there is an experience of awareness of that mass, unless that mass is measured somehow and the result communicated to a sentient observer. But in the case of a mental state grounded on region $R$, because of Observation \ref{ppIntegration}, there is instantaneous integrated experience determined by the physical state from $R$. This experience has the {Integration Property}, and it is instantaneous, according to Principle \ref{ppPhysicalMentalCorrespondence}.
And we know directly our own mental states, we have the direct integrated experience of them, they are this very experience. Even if we fall for folk psychology, even if our mental state is not exactly how we imagine it to be, even if we are silly enough to imagine it as a homunculus, what is for sure is that the experience is there, and this is what I mean by mental state. And this depends on the spacelike separated regions $A$ and $B$.
It is not scattered in a separate, independent way, in regions $A$ and $B$. If we would have to admit such a scattering, we would have in fact to admit that the experience is divided in all points of the region $R$, and this would mean no experience at all.
Because we know that experience is present, we conclude that the mental state is O-nonlocal, unlike the classical C-nonlocal properties. 
\end{reply}

\begin{objection}
\label{obj:extended_objects}
Consider a chair in Classical Mechanics, or any other rigid object extended in space, or even a soft one like a brain. This does not make it O-nonlocal, so why would mental states be O-nonlocal?
\end{objection}
\begin{reply}
\label{reply:extended_objects}
An extended object like a chair is made of particles that interact locally, and these interactions maintain the chair's configuration stable enough in time. There is nothing O-nonlocal here, except our notion of rigid object like a chair. Being extended in space is a C-nonlocal property, and the laws of Classical Physics are local. If we move the chair, we apply forces that propagate locally within the chair, and make all of its parts move in an apparently rigid way. If we sit on the chair, it supports our weight again by local interaction between its atoms. No O-nonlocality is required for it to exist as an approximately rigid object. The chair exists as a unified simultaneous configuration only in our minds, \cf {Observation} \ref{ppIntegration}, and this mental representation in our mental states is O-nonlocal, but not the chair.
\end{reply}

\begin{objection}
\label{obj:classical_computer}
A classical computer is extended in space, and is able to process information, independently of our mental representations, contrary to the claims made in Reply \ref{reply:extended_objects}.
\end{objection}
\begin{reply}
\label{reply:classical_computer}
In fact, this is a good example for Reply \ref{reply:extended_objects}. A classical computer is just a system that evolves in time according to the laws of physics. Its discrete states are physical states. They are seen as carrying information only by us, the users of the computer. We are the ones who interpret the inputs and outputs as information, but for the computer as a physical system they are merely initial and final states. We are the ones who interpret its time evolution as information processing. This computation seems objective, and is objective, in the sense that different users will agree that the same information processing takes place in the same computer. But this kind of objectivity arises from the fact that the underlying physical processes are objectively the same for all of us, and from the fact that we share the same conventions about the information that goes in and out of the computer.
Ultimately the attribution of experience to such a scattered, disconnected physical state is an illusion, a feature imagined in the very mental state entertaining this possibility, perhaps correlated to the activity of the mirror neurons.
\end{reply}

We see from the Replies to Objections \ref{obj:global}, \ref{obj:extended_objects}, and \ref{obj:classical_computer}, and from Remark \ref{rem:nonlocality-C}, that the difference between O-nonlocality and C-nonlocality is due to the {Integration Property}.

%------------------------------------------------------------%
\subsection{Can locality be enforced?}
\label{a:enforced-locality}

In this subsection I discuss some objections suggesting ways to prevent O-nonlocality by enforcing locality.

\begin{objection}
\label{obj:gates}
I think that an actual computer, especially if it is supposed to simulate human-like brains, is more complicated, and cannot be spread around the galaxy the way you claim in \sref{s:thought-experiment-digital}, a small number of bits on a separate space station gravitating a different star.
\end{objection}
\begin{reply}
\label{reply:gates}
It is known that the \textbf{NAND} gate is \emph{functionally complete}, \ie any logical circuit can be made of such gates.
You can realize a fully functional computer using only \textbf{NAND} gates, like the one built by Kevin Horton \cite{kevtris2013NANDputer}.
The \textbf{NAND} logical operation takes as input two bits $\tn{p}$ and $\tn{q}$, and outputs the negation of their conjunction, $\overline{\tn{p} \cdot \tn{q}}$ (see Fig. \ref{fig:nand-gate}).

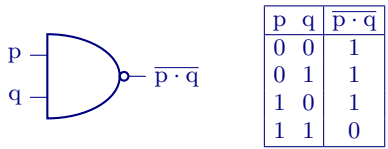
\begin{figure}[!ht]
\centering
\begin{circuitikz}
    \matrix[column sep=2em, ampersand replacement=\&,color=darkblue] {
				\draw (0,0) node[nand port] (mynand) {}
				(mynand.in 1) node[anchor=east] {p}
				(mynand.in 2) node[anchor=east] {q}
				(mynand.out) node[anchor=west] {$\overline{\tn{p} \cdot \tn{q}}$};
        \&
        \node {
            $\begin{array}{|cc|c|}
						\hline
                \tn{p} & \tn{q} & \overline{\tn{p} \cdot \tn{q}} \\ \hline
                0 & 0 & 1 \\
                0 & 1 & 1 \\
                1 & 0 & 1 \\
                1 & 1 & 0 \\
								\hline
            \end{array}$
            };
        \\
    };
\end{circuitikz}
\caption{\small{\label{fig:nand-gate}{The \textbf{NAND} gate, which is an inverted \textbf{AND} gate.}}}
\end{figure}

Let us place a single \textbf{NAND} gate on a different space station, orbiting a separate star. 
One may object that it is difficult to synchronize the timings when the bits $\tn{p}$ and $\tn{q}$ are received by the \textbf{NAND} gate, given the large distances between the stars. This is possible, and I will explain how it can be done y allowing each space station to store two bits, $b_1$ and $b_2$. Suppose the bits $\tn{p}$ and $\tn{q}$ come in sequence. It does not matter which of them comes first, because the operation $\overline{\tn{p} \cdot \tn{q}}$ is commutative. We use the bit $b_1$ to store the information that only one of the two bits $\tn{p}$ and $\tn{q}$ is collected. So when $b_1=0$, it means that no bit was collected. As soon as one of the two bits $\tn{p}$ and $\tn{q}$, say bit $\tn{p}$ arrives, it is copied in $b_2$, and we make $b_1=1$. When the bit $\tn{q}$ arrives, if $b_1=1$, we send both bits $\tn{p}$ and $\tn{q}$ through the \textbf{NAND} gate, and send the resulting output, by using an electromagnetic signal, to the next star in the circuit. Then we reset $b_1$ to $0$, indicating that the setup is ready for the next logical operation. If $b_1=0$ when the bit $\tn{q}$ arrives, it means that the bit $\tn{p}$ did not arrive yet, so we copy $\tn{q}$ in $b_2$ and make $b_1=1$, then wait for $\tn{p}$.
Therefore, two-bit components are sufficient. But we can go even deeper, considering that each logical gate is made of several diodes, resistors, and transistors\footnote{Diodes and transistors use quantum effects, but the logic gates can, in principle be implemented using Classical Mechanics.}. They usually are part of integrated circuits, but the computation is the same even if we place these electronic components on separate space stations around different stars.

We may prefer, of course, to use more bits to take care of the exchanged signals, to send them, maybe to receive confirmation that they arrived, maybe to send them repeatedly to implement error correction \etc, but it can be assumed that this is not part of the computation itself. But regardless of these details, the point is that it can be arranged so that no space station does a computation sufficiently complex to allow us to attribute a human-like mental state to that station alone. And let us not ignore the fact that even if the two bits $b_1$ and $b_2$ are very close to each other on the space station, the simultaneous events that they can support are spacelike separated.
\end{reply}

\begin{objection}
\label{obj:enforced-larger-gates}
Maybe a computer like in Reply \ref{reply:gates} cannot support local mental states. But such a computer was considered with the purpose to be maximally spreadable in space. What if we consider instead one that has components that are not so spreadable? Wouldn't such a computer be immune to the argument from \sref{s:thought-experiment-digital}?
\end{objection}
\begin{reply}
\label{reply:enforced-larger-gates}
This would not help, and here is why. The point of spreading the parts of the computer across the galaxy was not to ``destroy'' its mental states, but to show that they require O-nonlocality. It is not necessary to build it out of \textbf{NAND} gates to show this. 
First, any computer can be realized like this. Any computer is \emph{Turing equivalent} to one made solely out of \textbf{NAND} gates.
Second, its components are already spread in space. One can try to miniaturize the computer how much you want, its parts will still be spacelike separated. Spacelike separation does not depend on the scale. No matter how close they are in space, two simultaneous events are spacelike separated. The thought experiment from \sref{s:thought-experiment-digital} just emphasizes this separation, but it does not introduce it.
So no matter how complicated logical gates one will use, they will always execute bit operations separated in space.
\end{reply}

\begin{objection}
\label{obj:enforced-locality}
%A classical serial computer processes one bit at a time. Could this be used to overcome your Reply \ref{reply:enforced-larger-gates}?
Since Theorem \ref{thm:nonlocality-O} assumes Observation \ref{ppMentalExtended}, before accepting O-nonlocality, I think we should try more and see if it is possible to avoid this assumption, by constructing a centralized, and therefore local, model of mental states.
As long as we did not exclude without a shred of doubt this possibility, the conclusion of Theorem \ref{thm:nonlocality-O} is not justified.
\end{objection}
\begin{reply}
\label{reply:enforced-locality}
This is a good point, and I will do my best to make the case for enforced locality, before replying.

Let us start by using a classical computer architecture as a very rudimentary model of the brain, according to Claim \ref{claim:mental-classical-computer}.
I will use for reference the \emph{Harvard architecture} (see Fig. \ref{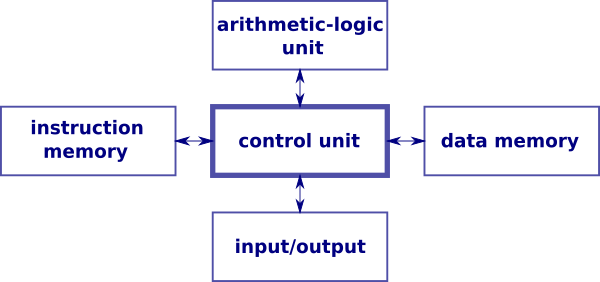}), rather than the \emph{von Neumann architecture}, because it appears to be more centralized.

\image{cpu-harvard-architecture.png}{0.45}{Harvard computer architecture.}

Can we take the existence of a centralizing unit, the \emph{control unit}, as a model of true centralization? \emph{Harvard's Mark I}, the first computer to use this architecture, weighed $4 300$ kg. Its sequence-control unit read a $24$-bit wide \emph{punched paper tape} and followed the instructions to operate the shafts. So it seems that even this one required more than one bit, but, given that a Turing machine can operate with one bit wide tapes, maybe it did not require, maybe it was just for practical reasons.

Is it possible to have something like a central or control unit, or whatever unit one may assume to be associated to the mental states, that requires only one bit of memory?
Given that \emph{a serial computer operates only one bit for each clock cycle}, we can take this bit as underlying the mental state. This is an extreme claim, but let us pursue this line of reasoning anyway.
A one-bit unit supporting mental states would eliminate the option of distributed, decentralized storage of that unit by using more stars. 
And we seem to be lucky with this one, because recently researchers achieved presumably the most local way to store a bit, by using a single atom \cite{Natterer2017ReadingWritingSingleAtomMagnets}. They were able to store a bit in the state of a Holmium atom, invert that bit, and even store different bits independently in different Holmium atoms placed at one nanometer apart, which makes this kind of storage scalable.

In this way, we can try to build a model in which the physical correlate underlying the mental state at a given time is that particular single bit that is operated at that particular instant of time. Whatever a serial computer does, it does it as a sequence of one-bit binary operations. 

The brain is different from a computer, in at least two major ways. First, the brain does not seem to operate logical gates, it is a \emph{neural network}, one that can change itself. But such neural networks can be simulated by digital computers to any degree of approximation.
The other difference is that the brain operates in parallel. But parallel computing exists too, and it still requires a central unit that breaks the task into smaller tasks, and then centralizes the results. And that unit operates sequentially, one-bit-at-a-time. Hence, we can still assume that a one-bit-at-a-time model of the brain is not excluded, at least if we assume independence of the substrate. 

We can therefore imagine that something like one-bit-at-a-time processing can happen in the brain, as illustrated very schematically in Fig. \ref{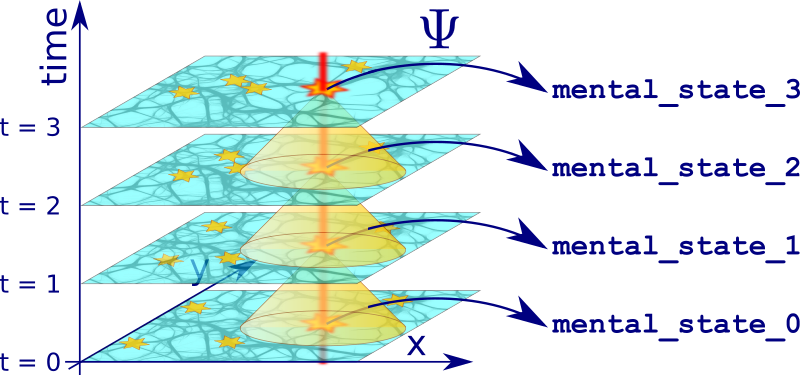}. We do not have evidence that this is the case, but it worth entertaining this hypothesis, if the gain is the avoidance of O-nonlocal mental states from Theorem \ref{thm:nonlocality-O}.

\image{mental-states-correlates-centralized.png}{0.45}{An attempt to make mental states local.}

Let us now assume that each bit operated in this sequence is encoded into a precisely localized physical system. True, Quantum Physics prevents it to be localized at a precise position. Apparently, Heisenberg's \emph{uncertainty principle} allows it to have a precise position, but the price for this is a completely uncertain momentum. This uncertainty of the momentum will make the immediately next position completely uncertain too. But maybe our system can be made to collapse periodically in a precise place. If we go in this direction, \emph{we invoke quantum effects for merely storing a classical bit!} And even if our computer is classical, invoking quantum mechanical collapse in order to obtained localized consciousness
\begin{enumerate}
	\item 
	violates Claim \ref{claim:mental-classical}, by appealing to quantum effects to allow the existence of mental states,
	\item 
	appeals to quantum nonlocality, in an attempt to avoid  O-nonlocality.
\end{enumerate}

So our bit has to be stored in a small region, but not a precisely localized one. A Holmium atom in a relatively stable state should do it. And we can be happy with the situation, because such an atom is approximately local. But it is not quite local, while the wavefunctions of its particles are concentrated around the nucleus, they in fact extend in space to infinity! Flipping a bit encoded in an atom involves changes of the wavefunction that extend to infinite distances. Therefore, this ``approximate locality'' is still nonlocal. Even if in practice there is no way to probe this nonlocality, because our instruments are also quantum, and the wavefunction collapses, and Heisenberg's uncertainty principle kicks in.

But we can try to do even more, and assume an interpretation of Quantum Physics which allows point-particles -- the Pilot-Wave Theory. This way, the bit can be stored in the state of one of the well-localized point-electrons of the Holmium atom. But the Pilot-Wave Theory is a paradigmatic example of nonlocal theory \cite{Bell2004SpeakableUnspeakable}: \emph{the point-particles are guided by the pilot-wave, in a way that depends on the positions of other point-particles in a nonlocal way!}

In addition, there is a severe limit of the amount of information one can store and extract from a region of space, the \emph{Bekenstein bound} \cite{Bekenstein1981UniversalUpperBound,Bekenstein2005HowDoesEntropyInformationBoundWork,Casini2008RelativeEntropyAndTheBekensteinBound,Bousso2002HolographicPrinciple}.
And it turns out that the information that can be stored at a point of space is basically zero. No matter how efficient we do it, suppose we can reduce the implementation of the bits to the Plank scale, we will always need an extended region of space to store even a single bit.

So it appears that \emph{we cannot get rid of nonlocality, even if we assume that mental states are determined by a single bit!}
\end{reply}

It is true, this type of nonlocality may seem weaker,
and it is weaker, it doesn't even contradict Claims \ref{claim:mental-classical} and \ref{claim:mental-classical-computer}. But
the price is that we are limited to one-bit mental states only!
And what kind of mental state, what kind of experience requires only one bit? Is our experience so simple, that a single bit is enough to support or describe it? What solution would be more crazy, that mental states are O-nonlocal, or that they can be stored in an electron (in a still nonlocal way)? Subjective empirical observations show the following (confirmed by Damasio's quote given in support of Observation \ref{ppMentalExtended}):
\begin{observation}[diversity of mental states]
\label{ppDiversity}
Our mental states are complex and diverse, and there are definitely more than two possible mental states.
\end{observation}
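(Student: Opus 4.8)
The plan is to justify this the way the paper has justified its other statements about the mental: by direct first-person evidence, corroborated by the agreement that such evidence finds in third-person reports, and then to connect it back to the surrounding argument. First I would invoke Observation \ref{ppDirectObservation}: whatever remains uncertain about the \emph{nature} of mental states, each of us has \emph{direct} access to our own, so the presence of a large catalogue of phenomenally distinct states --- seeing red rather than hearing a tone, being in pain rather than being amused, recalling a face rather than solving an equation --- can be read off immediately. All that Observation \ref{ppDiversity} asserts is that this catalogue has more than two entries, which is the weakest nontrivial thing introspection could report here.

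Next I would reinforce this by the \emph{intersubjective} route already used elsewhere in the paper: people report a great diversity of mental states; by Reply \ref{reply:paranormal} these reports are encoded in physical signals and arrive through ordinary physical interaction; and different observers agree about them. So a reader who, in the spirit of the caveats surrounding Observation \ref{ppIntegration}, distrusts introspection as a guide to what mental states are \emph{like}, still cannot plausibly deny the one lesson it is here being asked to license. This is further supported by Damasio's remark, quoted in support of Observation \ref{ppMentalExtended}, on the complexity of even a ``simple'' mental state.

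Finally I would close the loop with Reply \ref{reply:enforced-locality}: a one-bit correlate has exactly two states, so the function $\Psi$ of eq. \eqref{eq:psi-mental-states} can take at most two values on it, hence support at most two mental states. Observation \ref{ppDiversity} is precisely the premise that excludes this, forcing the physical correlate of a mental state to carry more than one bit and therefore --- by the Bekenstein-bound and relativity-of-simultaneity considerations of that Reply --- to be genuinely spacelike extended, which recovers Observation \ref{ppMentalExtended} and feeds Theorem \ref{thm:nonlocality-O}.

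The main obstacle will not be a computation but the evidential status of the premise: it rests on introspection, which the paper itself concedes to be unreliable. The proposed defense is the one already deployed in the discussion after Observation \ref{ppIntegration} --- the statement uses only the \emph{minimal} content of introspection, that there are many mental states, not what they are like --- and a reader unwilling to grant even that is not refuting this step but declining the whole framework, along the lines of Option \ref{opt:no-mental-states} or Option \ref{opt:instrumentalism}.
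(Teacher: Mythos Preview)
Your proposal is correct and takes essentially the same approach as the paper: the paper justifies this Observation in a single sentence---``Subjective empirical observations show the following (confirmed by Damasio's quote given in support of Observation \ref{ppMentalExtended})''---and you do the same, grounding it in first-person evidence plus the Damasio quote, while adding contextual elaboration (the link to Observation \ref{ppDirectObservation}, intersubjective corroboration, and the one-bit argument from Reply \ref{reply:enforced-locality}) that the paper leaves implicit. Your treatment is more thorough than the paper's own, but not a different route.
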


Note that even if we take the position that the substrate, the material of the brain, is essential to support mental states, we still have to localize it at the level of a single atom or a particle, and Quantum Physics still does not allow true locality, just like I explained above, and this kind of ``localized homunculus'' does not seem plausible.

%------------------------------------------------------------%
\subsection{Objections related to time}
\label{a:objections-time}

It remains a possible way out of the problem of diversity of mental states posed by Observation \ref{ppDiversity}: assume that mental states are as diverse as needed, but they are not determined by a single bit, or by the state of a single particle or even an atom, but \emph{are determined by a sequence in time of the values that bit can have}.

\begin{objection}
\label{obj:nonlocality-in-time}
Even if assuming that mental states are instantaneous implies that they are O-nonlocal, maybe they are not instantaneous, maybe they are in fact extended in time. This violates both Principle \ref{ppPhysicalMentalCorrespondence} and Observation \ref{ppInstantism}, and therefore prevents you from reaching the conclusion of Theorem \ref{thm:nonlocality-O}.
\end{objection}
\begin{reply}
\label{reply:nonlocality-in-time}
The claim that mental states are not instantaneous contradicts Metaprinciple \ref{ppDynamicalSystem}, according to which the physical
world is described by a dynamical system. It is hard to see how denying this Metaprinciple can allow someone to still hold a materialist, or even a physicalist position. But let us go this way, for the sake of the argument, and suppose that mental states can only exist extended in time, as in Fig. \ref{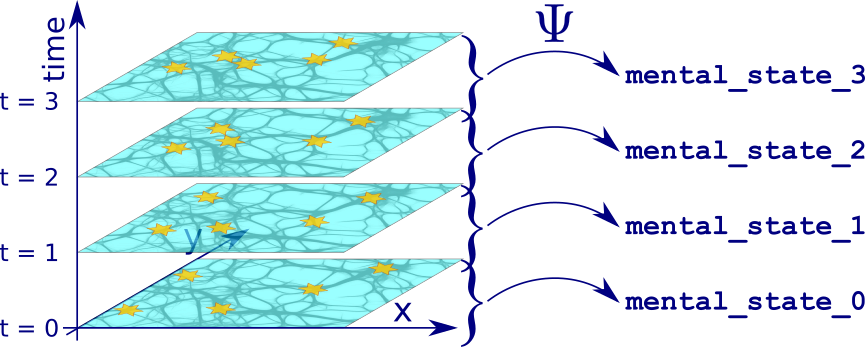}. 

\image{mental-states-correlates-extended-in-time.png}{0.45}{If mental states are extended in time, then they are nonlocal in time.}

Then, Eq. \eqref{eq:psi-mental-states} should be replaced with an equation that expresses a dependency of mental states of the physical states at different times, for example in an interval $[t_i,t_{i+1}]\subset \R$,
\begin{equation}
\label{eq:psi-mental-states-time-extended}
\tn{\texttt{mental\_state}}(t_i)=\Psi\(\tn{\texttt{physical\_state}}|_{[t_i,t_{i+1}]}\)
\end{equation}

But this would only make mental states be nonlocal in time. 
To better understand this, it may help to apply the thought experiment from Sec. \sref{s:thought-experiment} to this argument of temporal extension of mental states.
\end{reply}

\begin{objection}
\label{obj:philosophy-of-time}
I cannot accept your argument because my philosophical position about time is \emph{eternalism} (or by contrary, \emph{presentism}), which is in contradiction with Observation \ref{ppInstantism}.
\end{objection}
\begin{reply}
\label{reply:philosophy-of-time}
Instantism is neutral to the dispute between \emph{presentism} (the position that only the present time exists, and the world changes in time) \vs \emph{eternalism} (the position that all instants of time are equally real and immutable, but each instance of ourselves experiences its own instant as the present).
Instantism is consistent with both of these positions, being simply a direct consequence of the possibility to express the physical laws in terms of dynamical systems, which is consistent with both presentism and eternalism.

Moreover, a common argument raised by some presentists against eternalism is that there can be no experience of time in the block universe of eternalism. But instantism shows that whatever explains our experience of time, the explanation has to apply to each instant, so the very reason that is assumed to allow the experience of time in presentism has to remain the same in eternalism.
\end{reply}

\begin{objection}
\label{obj:records}
If Observation \ref{ppInstantism} is true and we can only access our present state, then how is it even possible to remember the past? How is it even possible to do Science?
\end{objection}
\begin{reply}
\label{reply:records}
Observation \ref{ppInstantism} is trivially correct, denying it is out of discussion, but Objection \ref{obj:records} is nevertheless important.
The problem that the present state can access directly only itself is well-known and not specific to my arguments, but a general problem of physics. The explanation is related to the \emph{arrow of time}. The hypothesis that the universe was in a very special, low-entropy state (presumably at the Big Bang), sometimes called the \emph{Past Hypothesis}, explains the overall increase of entropy in one direction of time (which by definition is called \emph{future}), and other arrows of time.

This hypothesis is required by the understanding of Thermodynamics in terms of Statistical Mechanics, and solves several problems. In particular, it is thought to explain that we know much more about the past compared to the future because the present contains records of past events. These records appear as patterns in the macro states, and they would be indistinguishable or ambiguous if the Past Hypothesis would not be true. This is so because we would not know to interpret them as memories of the past events; they could as well indicate future events or be mere statistical fluctuations. 

There is plenty of literature addressing this problem of time asymmetry and records convincingly, here is a selection \cite{Eddington1928NatureOfThePhysicalWorld,Feynman1965TheCharacterOfPhysicalLaw,Penrose1989EmperorsNewMind,DavidZAlbert2003TimeAndChance,SeanCarroll2010FromEternityToHere,DavidZAlbert2015AfterPhysics,BarryLoewer2016MentaculusVision}. However, these problems are not yet completely understood. But Observation \ref{ppInstantism} is still correct, and any explanations of the arrows of time have to take this into account.
\end{reply}

%------------------------------------------------------------%
\section{Alternative options}
\label{a:options}

In this section I discuss possible alternative options, including some that seem to escape the argument that mental states are O-nonlocal.

First, let us identify the possible alternatives to O-nonlocality by looking into the assumptions of Theorem \ref{thm:nonlocality-O} and of Corollary \ref{thm:claims_contradiction}. Corollary \ref{thm:claims_contradiction} establishes that each of Claims \ref{claim:mental-classical} is \ref{claim:mental-classical-computer} are inconsistent with the {Integration Property}. If we want to save classicality, we can reject Principle \ref{ppPhysicalMentalCorrespondence}, the {Integration Property}, or Claim \ref{claim:mental2physical-reduction}. 

\begin{option}[deny Principle \ref{ppPhysicalMentalCorrespondence} or the {Integration Property}, instrumentalist version]
\label{opt:instrumentalism}
The role of Science is only to give an \emph{instrumentalist reduction} of mental states to physical states, but not necessarily an \emph{ontological reduction}.
Consequently, as long as the claimed O-nonlocality of mental states does not lead to faster-than-light or back-in-time signaling or other violations of the known physical laws, we should not even care about the mental states, because they are not objective properties.
Since private experience is not objectively verifiable, it does not need to be explained.
One should only care about the physical states, which are observable, and our language should not be contaminated by statements about mind, consciousness, \etc.
\end{option}

\begin{option}[deny Principle \ref{ppPhysicalMentalCorrespondence} or the {Integration Property}, illusionist version]
\label{opt:no-mental-states}
There are simply no mental states, period. This will do away with the whole problem, because if there are no mental states, there is no need for the function $\Psi$ postulated in Principle \ref{ppPhysicalMentalCorrespondence}, such a function would make no sense at all. Mental states are an illusion of some physical states arranged in the right configuration that corresponds to that illusion.
\end{option}

Sometimes, it becomes very difficult to distinguish among Options \ref{opt:instrumentalism} and \ref{opt:no-mental-states}.
As Searle put it \cite{Searle1992RediscoveryOfMind},
\begin{quote}
Very few people are willing to come right out and say that consciousness does not exist. But it has recently become common for authors to redefine the notion of consciousness so that it no longer refers to actual conscious states, that is, inner, subjective, qualitative, first-person mental states, but rather to publicly observable third-person phenomena. Such authors pretend to think that consciousness exists, but in fact they end up denying its existence.
\end{quote}

Theorem \ref{thm:nonlocality-O} does not assume directly Claim \ref{claim:mental2physical-reduction}, it assumes Principle \ref{ppPhysicalMentalCorrespondence}, which, while being a consequence of Claim \ref{claim:mental2physical-reduction}, it can be true in any approach to the mind-body problem which admits a relation between the mental states and the physical states. So in fact Corollary \ref{thm:claims_contradiction} can be generalized to
\begin{corollary}
\label{thm:locality-vs-mind-body-relation}
Given {Observation} \ref{ppIntegration}, locality and the existence of physical correlates of mental states cannot both be true.
\end{corollary}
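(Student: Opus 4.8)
The plan is to reduce this statement to Theorem \ref{thm:nonlocality-O} together with Observation \ref{ppLocality}, generalizing the argument of Corollary \ref{thm:claims_contradiction} so that it no longer relies on full reduction (Claim \ref{claim:mental2physical-reduction}) but only on the weaker hypothesis that mental states possess physical correlates. First I would argue that this weaker hypothesis already entails Principle \ref{ppPhysicalMentalCorrespondence}, at least in the restricted form applying to those aspects of a mental state that have physical correlates: to say that a mental state has a physical correlate is precisely to say that there is a (partial) function $\Psi$ as in eq. \eqref{eq:psi-mental-states} assigning the mental state, or the correlated part of it, to the physical state, exactly as observed in the Reply to Objection \ref{obj:PhysicalMentalCorrespondence}. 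Even a dualist who grants correlates grants this much.

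Next I would assume for contradiction that locality also holds, and invoke Metaprinciple \ref{ppDynamicalSystem} (or, as in the proof of Corollary \ref{thm:claims_contradiction}, dispose separately of the case in which it fails, noting that a physics lacking a dynamical-system structure is hard to square with there being well-defined physical correlates evolving in time). Then Theorem \ref{thm:nonlocality-O}, applied under Observation \ref{ppIntegration}, yields that the mental state is O-nonlocal: by Definition \ref{def:nonlocality-O} it is directly observable (Observation \ref{ppDirectObservation}) and depends instantaneously on spacelike separated events. By Observation \ref{ppMentalExtended} the physical correlate genuinely occupies a spatial region, and by Observation \ref{ppInstantism} the simultaneous events of that region are pairwise spacelike separated, so the dependence is nontrivial --- it is the dependence on the disjoint subregions $A$ and $B$ of Remark \ref{rem:separation-integration}. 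But Observation \ref{ppLocality} states that in a local theory any event has direct access only to the physical quantities located at that event; hence no directly observable property can depend instantaneously on spacelike separated events, contradicting O-nonlocality. Therefore, given Observation \ref{ppIntegration}, locality and the existence of physical correlates of mental states cannot both hold.

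The step I expect to be the main obstacle is the first one: making it airtight that ``mental states have physical correlates'' already suffices to run Theorem \ref{thm:nonlocality-O}. One must close the escape route in which a dualist concedes correlates but insists the mental state \emph{also} depends on something nonphysical, hoping this extra dependence somehow defuses the spatial-extension argument. The reply is that the physical correlate by itself is spatially extended and directly accessed in the sense of Observation \ref{ppDirectObservation}, so the O-nonlocality is already forced by the physical part alone; whatever the nonphysical addendum is, it cannot restore locality to the physical part. A secondary subtlety is the dependence of Theorem \ref{thm:nonlocality-O} on Metaprinciple \ref{ppDynamicalSystem}, which I would handle by the same case split inherited from Corollary \ref{thm:claims_contradiction}.
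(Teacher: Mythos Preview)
Your approach is correct and lands on the same core move as the paper: weaken ``full reduction'' to ``there is some correspondence $\Psi$,'' then invoke Theorem \ref{thm:nonlocality-O} to force O-nonlocality, which is incompatible with locality. The reductions are the same in spirit.

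Where the two presentations diverge is in what they choose to make explicit. The paper spends its proof relaxing the \emph{type} of $\Psi$: rather than a (partial) function, it argues $\Psi$ need only be a nontrivial \emph{relation} --- a subset of pairs $(\tn{\texttt{mental\_state}},\tn{\texttt{physical\_state}})$ --- and that as long as this subset is not empty or degenerate, Theorem \ref{thm:nonlocality-O} still fires. It then closes the residual loophole (``what if the correlate is effectively one bit or one particle, so the relation is trivial?'') by appealing to Observation \ref{ppDiversity} (diversity of mental states), which you do not invoke. Conversely, you structure the argument as a clean proof by contradiction in the mold of Corollary \ref{thm:claims_contradiction}, with the explicit case split on Metaprinciple \ref{ppDynamicalSystem} and a direct appeal to Observation \ref{ppLocality} for the clash; the paper's proof of this corollary does neither of those things explicitly. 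Your handling of the dualist escape route (the nonphysical addendum cannot restore locality to the spatially extended physical part) is a point the paper defers to the surrounding discussion (Option \ref{opt:nonphysical_nonlocality}) rather than to the proof itself. Net effect: your version is tighter as a formal derivation; the paper's version buys a slightly more general $\Psi$ (relation rather than partial function) and an explicit block against the one-bit degeneracy via Observation \ref{ppDiversity}, which you may want to add.
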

\begin{proof}
Principle \ref{ppPhysicalMentalCorrespondence} states that there is a correspondence $\Psi$ between physical and mental states, see eq. \eqref{eq:psi-mental-states}.
But $\Psi$ is not necessarily a function of the form $f:A\to B$, since it is probably not defined for all physical states, because not all physical states appear to support mental states. It is a partial function, because it is defined on a subset of all possible physical states. Moreover, it is as well hard to prove that $\Psi$ is surjective, there may be mental states without physical correspondent. But in all these cases, $\Psi$ is a relation, \ie a subset of all the pairs of the form $\(\tn{\texttt{mental\_state}},\tn{\texttt{physical\_state}}\)$, of the Cartesian product between the set of all physical states and the set of all mental states. Even in this general case, Principle \ref{ppPhysicalMentalCorrespondence} can be true, unless the relation $\Psi$ is an empty subset of this Cartesian product, or a small subset that makes the relation between physical and mental states irrelevant. But as long as there is a relevant relation, \ie as long as the physical state has something to say about the mental state, even if it does not completely determine it, Principle \ref{ppPhysicalMentalCorrespondence} is true and Theorem \ref{thm:nonlocality-O} can be applied, with the consequence that O-nonlocality is required. So to enforce the locality of the mental states, we have to make them supported by one-bit or one-particle physical states, which means to make the relation $\Psi$ a small subset of the Cartesian product, \ie to reject the existence of such a relation between physical and mental states, contrary to Observation \ref{ppDiversity} of \nameref*{ppDiversity}.
\end{proof}

This suggests other options to avoid O-nonlocality.

\begin{option}[deny Claim \ref{claim:mental2physical-reduction}]
\label{opt:nonphysical_nonlocality}
Assume \emph{dualism}, \emph{property dualism}, \emph{panpsychism}, \emph{neutral monism} or other vews about consciousness that deny physicalism.
From Corollary \ref{thm:locality-vs-mind-body-relation}, we see that Claim \ref{claim:mental2physical-reduction} is not necessary to reach the conclusion that mental states are O-nonlocal. What needs to be assumed for this to hold are Observations \ref{ppMentalExtended} and \ref{ppIntegration}, and a weak form of Principle \ref{ppPhysicalMentalCorrespondence}, in which mental states are at least partially a function of the physical states, or in which there is a relation, in the mathematical sense described in the proof of the Corollary. This relation does not have to involve the full mental state. 
The consequence of the generalization of Corollary \ref{thm:claims_contradiction} to Corollary \ref{thm:locality-vs-mind-body-relation} is that mental states have to be O-nonlocal even in other theories about consciousness like those mentioned in Option \ref{opt:nonphysical_nonlocality}. Now, if in these theories there are ``nonphysical mental properties'', they can be the ones to support entirely the O-nonlocality, and the ``physical properties'' can remain classical. So these theories do not require quantum effects, since they already made other assumptions, beyond both classical and quantum physicalism.
\end{option}
\begin{comment}
\label{comm:nonphysical_nonlocality}
This is a way to avoid nonlocal physics, but without saving even the more general Claim \ref{claim:mental2physical-reduction}, because dualism introduces nonphysical properties. I cannot even imagine what would be nonphysical properties, since as long as they are consistently describable by propositions, we can include them among the physical ones, by extending what we mean by physics.
On the other hand, O-nonlocality is allowed by Quantum Physics, the price being to give up Claim \ref{claim:mental-classical}.
I do not see how this alternative option would work, or how it would be able to do more than quantum physicalism, unless it has other advantages, the most desirable one being to explain sentient experience. But for the moment no explanation of sentient experience is known \cite{Sto2020NegativeWayToSentience}.
\end{comment}
%
%
%\begin{option}[localized homunculus]
%\label{opt:single-bit}
%The argument for O-nonlocality is convincing, but we can avoid it by assuming that the physical states supporting mental states can only store a small amount of information, like one bit.
%\end{option}
%\begin{comment}
%\label{comm:single-bit}
%I hope the discussion of the one-bit-mental-states argument in \sref{a:enforced-locality} proved that if the physical system supports only one bit, then
%\begin{enumerate}
	%\item 
	%there can be only two possible mental states (that maybe we can call \texttt{on} and \texttt{off}),
	%\item 
	%even so, quantum nonlocality will kick in, even if at a very small scale.
%\end{enumerate}
%\end{comment}

\begin{option}[deny Principle \ref{ppPhysicalMentalCorrespondence} and Claim \ref{claim:mental2physical-reduction}]
\label{opt:bound_state}
This Option is similar to Option \ref{opt:nonphysical_nonlocality}, but, in addition, it assumes that mental state can act on the physical world through very localized inputs in the brain, in a local way consistent with the arguments from \sref{a:enforced-locality}. It does not violate Observation \ref{ppDiversity}, by being richer than what the atom or particle or bit through which the brain accepts them as inputs can support. Imagine a computer whose input is a one-button keyboard, and you can input using the Morse code. One can input any text like this. So, if the $\tn{\texttt{physical\_state}}$ from eq. \eqref{eq:psi-mental-states} is just the input to the brain, not the true physical correlate of the mental state, rich mental states can exist and express themselves physically.
\end{option}
\begin{comment}
\label{comm:bound_state}
Please refer to Comment \ref{comm:nonphysical_nonlocality}.
While this kind of dualism is in principle possible, I personally think it is premature to accept it. Other options allow us to know more about consciousness, and their exploration is not yet exhausted. Accepting this kind of dualism would mean to give up too early. While I cannot refute this option, I find it unhelpful, at least for the moment. 
\end{comment}

\begin{option}[single-bit temporal sequence]
\label{opt:nonlocality-in-time}
The physical system underlying mental states contains only one bit at a time, but the value of the bit changes in time, and the sequence of these bits is what underlies the mental states.
\end{option}
\begin{comment}
\label{comm:nonlocality-in-time}
This option was discussed in \sref{a:enforced-locality}, and in Objection \ref{obj:nonlocality-in-time}, with the conclusion that this would still be nonlocality, albeit in time.
\end{comment}

\begin{option}
\label{opt:holomorphic}
There is a way to store and access as much information as needed at a single point of space: \emph{holomorphic functions}. An analytic or holomorphic function is defined on an extended space, but its values, and the values of its partial derivatives at each point of space, can be used to determine the value of the function at any other point, by using power series expansions at that point.
\end{option}
\begin{comment}
\label{comm:holomorphic}
This is mathematically true. In fact, Stoica \cite{Stoica2017IndrasNetHolomorphicFundamentalness} argues that the fundamental laws of physics may be holomorphic, and that the ontology is not distributed in space or spacetime, but it is all concentrated in a \emph{germ} of the holomorphic field, or an equivalence class of germs of the holomorphic field, from which the fields can be recovered by power series expansion and analytic continuation.

A problem with this option is that there is no known way to access even the precise value, let alone the values of the higher order partial derivatives of the hypothetical field, and if this would be possible, it has to work only up to a certain, unknown limit, because otherwise it could be used to violate faster-than-light or back-in-time no-signaling. There is no known mechanism other than quantum measurement to extract even partial information from the germ. But, for this discussion, the most important aspect is that the germ of a holomorphic field is essentially nonlocal anyway. So, even if this Option would be true, we are back to nonlocality, although it is not clear how this can be related to quantum nonlocality. But there may be some indications that such a relation exists \cite{Sto2020PostDeterminedBlockUniverse}.
\end{comment}

\begin{option}
\label{opt:nonlocality-in-space}
Theorem \ref{thm:nonlocality-O} is correct, and mental states are indeed O-nonlocal.
\end{option}
\begin{comment}
\label{comm:nonlocality-in-space}
I think this is the right conclusion. But I have no explanation for how this works, and no relevant understanding of the consequences of this option.
In Sec. \sref{s:quantum} I argued that there is a strong parallelism with quantum nonlocality, but this does not answer the questions, it merely provides a physical support for nonlocality.
\end{comment}

%------------------------------------------------------------%
%\bibliographystyle{plain}%amsplain
%\bibliography{../../bib/references}

\end{document}